\newtheorem{theorem}{Theorem}
\newtheorem{prop}{Proposition}
\newtheorem{lemma}{Lemma}
\title{Stochastic dynamics of two-compartment cell proliferation models with regulatory mechanisms for hematopoiesis}
\author{
  Ren-Yi Wang$^{1}$, Marek Kimmel$^{2}$, and Guodong Pang$^{3}$ \\
  $^{1,2}$\text{Department of Statistics, Rice University, Houston, TX, 77005, USA}\\
  $^{3}$\text{Department of Computational Applied Mathematics and Operations Research,}\\ \text{Rice University, Houston, TX, 77005, USA } \\
  $^{1,2,3}$\texttt{rw47,kimmel,gdpang@rice.edu} \\
}
\begin{document}
\maketitle
\allowdisplaybreaks

\begin{abstract}
    We present an asymptotic analysis of a stochastic two-compartmental cell division system with regulatory mechanisms inspired by \cite{getto2013global}. The hematopoietic system is modeled as a two-compartment system, where the first compartment consists of dividing cells in the bone marrow, referred to as type $0$ cells, and the second compartment consists of post-mitotic cells in the blood, referred to as type $1$ cells. Division and self-renewal of type $0$ cells are regulated by the population density of type $1$ cells. By scaling up the initial population, we demonstrate that the scaled dynamics converges in distribution to the solution of a system of ordinary differential equations (ODEs). This system of ODEs exhibits a unique non-trivial equilibrium that is globally stable. Furthermore, we establish that the scaled fluctuations of the density dynamics converge in law to a linear diffusion process with time-dependent coefficients. When the initial data is Gaussian, the limit process is a Gauss-Markov process. We analyze its asymptotic properties to elucidate the joint structure of both compartments over large times. This is achieved by proving exponential convergence in the 2-Wasserstein metric for the associated Gaussian measures on an $L^{2}$ Hilbert space. Finally, we apply our results to compare the effects of regulating division and self-renewal of type $0$ cells, providing insights into their respective roles in maintaining hematopoietic system stability.
\end{abstract}

\noindent \textbf{Keywords}: Hematopoiesis; Regulatory mechanism; Stochastic two-compartment model; Functional limit theorems; Stability; Large-time asymptotic behavior

\section{Introduction}
Hematopoiesis is the process of blood cell production. A hematopoietic stem cell (HSC) can proliferate, differentiate, or die. When an HSC differentiates, it becomes a multipotent progenitor cell, which further differentiates to produce mature cells. In \cite{getto2013global}, the authors refer to all dividing cells in the bone marrow as stem cells and all differentiated cells in the blood as mature cells. To avoid confusion between HSCs and type $0$ cells, we refer to all dividing cells as type $0$ cells and all differentiated cells as type $1$ cells. Regulatory mechanisms are necessary to achieve homeostasis (the stability of the cell production system). Specifically, we model the regulations based on extracellular signaling molecules produced by type $1$ cells. For instance, cytokines can be produced by granulocytes (\cite{getto2013global}) and megakaryocytes (\cite{bruns2014megakaryocytes}) to regulate type $0$ cell proliferation. For a more detailed biological exposition of regulatory mechanisms in the hematopoietic system, we refer readers to \cite{hurwitz2020hematopoietic}.

Numerous mathematical models have been developed to investigate various aspects of hematopoiesis. For example, \cite{getto2013global} studied the global stability of two-compartment models with specific regulatory mechanisms, while \cite{marciniak2009modeling} conducted simulations to assess the efficiency of different regulatory strategies in multi-compartment models. The qualitative behavior of hematopoiesis models, formulated as systems of nonlinear delay differential equations, was analyzed in \cite{arino1986stability}. The dynamics of hematopoietic system oscillations have been examined in \cite{knauer2020oscillations} and \cite{bonnet2021combined}. Furthermore, \cite{bonnet2021large} introduced a stochastic model, devoid of regulatory mechanisms, to explain unexpected fluctuations in mature blood cell counts.

Asymptotic analysis of deterministic models for hematopoiesis with regulatory mechanisms have been extensively studied as mentioned above,  while stochastic models with regulatory mechanisms have received comparatively less attention (see, e.g., \cite{lomeli2021optimal} and references therein).
In this paper, we aim to bridge the gap between deterministic and stochastic modeling of hematopoiesis with regulatory mechanisms. We propose a stochastic model and generalize the regulatory mechanisms described in \cite{getto2013global}. Our stochastic framework offers three key advantages. First, when two deterministic models have similar fits to data, it is of advantage to return to the underlying stochastic models to investigate if they differ with respect to the level of fluctuations in transient and equilibrium states. By comparing these fluctuations with observed data, we can distinguish competing models. Second, in the deterministic framework, regulation is deemed efficient if the cell dynamics converge rapidly to the steady state. In the stochastic framework, we introduce an additional measure of effectiveness: a regulation is effective if fluctuations around the steady state remain small. Therefore, it is possible for a regulatory mechanism to be efficient but not effective, and vice-versa. Finally, by analyzing the self-dependence (autocorrelation function) of the dynamics, we can investigate the joint structure and range of dependence in the process.

We view hematopoiesis as a two-compartment model. The first compartment contains type $0$ cells and the second compartment contains type $1$ cells. For $i \in \{0,1\}$, let $N_{i}^{(r)}(t)$ denote the number of type $i$ cells at time $t$ with $N_{i}^{(r)}(0) = n_{i}^{(r)}>0$. The parameter $r$ is a scaling parameter such that 
\begin{align*}
    \bar{n}_{i}^{(r)} = \frac{n_{i}^{(r)}}{r} \to \bar{n}_{i} > 0;\quad i \in \{0,1\}.
\end{align*}
If we view scaling up $r$ as scaling up the physical space where type $0$ and type $1$ cells reside, $\bar{n}_{i}^{(r)}$ can be interpreted as density of type $i$ cells at time $0$. Denote the rate of division and the probability of self-renewal of type $0$ cells by $p$ and $a$, respectively. Then, a type $0$ cell proliferates at rate $p \cdot a$ and differentiates at rate $p \cdot (1-a)$. We model the cell cycle length using exponential distributions. The validity of this assumption in the context of hematopoiesis is discussed in detail in Section $\ref{discussion}$. We study two types of regulatory mechanisms in this paper. The first model regulates the division rate $p$ by letting it depend on the density of type $1$ cells ($\Bar{N}_{1}^{(r)}=N_{1}^{(r)}/r$). The second model regulates the self-renewal probability $a$, which also depends on $\Bar{N}_{1}^{(r)}$. Similar to many biological processes, our model can be formulated as a stochastic chemical reaction network; we refer readers to \cite{anderson2015stochastic} for the general theory.

We derive large-scale approximations for density-dependent dynamics using the standard theory presented in \cite{ethier2009markov}. For a more comprehensive discussion on functional limit theorems for density-dependent dynamics, we refer readers to \cite{kurtz1981approximation} and a more recent work by \cite{prodhomme2023strong}. Under conditions on $p$ and $a$ specified in Section \ref{model and assumption}, we derive deterministic approximations (FLLN) for the density dynamics of $\bar{\mathbf{N}}^{(r)} = \mathbf{N}^{(r)}/r$ for large $r$ in Section \ref{FLLN}. The FLLN limit indicates that the density dynamics $\Bar{\mathbf{N}}^{(r)}$ has a deterministic central tendency $\Bar{\mathbf{N}}$ as $r \to \infty$. The central tendency $\Bar{\mathbf{N}}$ satisfies a system of ordinary differential equations (ODEs) of the form $\bar{\mathbf{N}}' = \mathbf{F}(\bar{\mathbf{N}}); \bar{\mathbf{N}}(0) = \bar{\mathbf{n}}$. This system has a non-trivial equilibrium $\bar{\mathbf{N}}^{*}$ that is globally stable. More specifically, we show for any fixed initial condition $\bar{\mathbf{n}} > 0$, $\|\bar{\mathbf{N}}(t)-\bar{\mathbf{N}}^{*}\|_{2} = o(\exp(-|\eta| t))$ as $t\to\infty$ for all $\eta \in (\lambda,0)$, where $\lambda$ is the maximal real part of $\nabla\mathbf{F}(\bar{\mathbf{N}}^{*})$'s eigenvalues. In \cite{getto2013global}, global stability was demonstrated using Lyapunov functions. In this paper, we instead apply the Poincaré–Bendixson theorem and rule out undesired cases, as constructing Lyapunov functions is more challenging for the broader class of models considered here. 

In Section $\ref{FCLT}$, we consider the scaled fluctuation around the central tendency, denoted by $\hat{\mathbf{N}}^{(r)} = \sqrt{r}(\bar{\mathbf{N}}^{(r)}-\bar{\mathbf{N}})$. We show the sequence of processes $\hat{\mathbf{N}}^{(r)}$ converges weakly to $\hat{\mathbf{N}}$ (the FCLT), which is the solution to a linear diffusion with time-dependent coefficients. We show that the time-dependent coefficients of the limiting diffusion $\hat{\mathbf{N}}$ become constant when the FLLN limit starts at its equilibrium $\bar{\mathbf{N}}^{*}$ in Section \ref{fclt eq}, resulting in a stationary linear diffusion process (Ornstein-Uhlenbeck process) c.f. \cite{karatzas2012brownian}. Consequently, the covariance function of the limiting diffusion under the FLLN equilibrium converges to a constant $V^{*}$, which can be solved by a linear equation after specifying a regulatory mechanism (the functional form of $p$ and $a$). Next, we show that the limiting diffusion $\hat{\mathbf{N}}$ at large times resembles a linear diffusion $\mathbf{G}^*$ with mean function $\mathbf{m}(t) = \mathbf{0}$. For $s \leq t$, the autocovariance function depends on $s$ and $t$ only through $t-s$. Specifically, we show that the time-shifted FCLT dynamics $\mathbf{G}_T(\cdot) = \hat{\mathbf{N}}(T+\cdot)$ converges to $\mathbf{G}^*(\cdot)$ over $[0,M]$ for arbitrary $M>0$ as $T\to\infty$, where we cast these processes to a space of Gaussian measures on a $L^2$ Hilbert space and use the $2-$Wasserstein metric. We derive the explicit $2-$Wasserstein metric for the Gaussian measures corresponding to $\mathbf{G}_T(t)$ and $\mathbf{G}^*$ and show that convergence is exponentially fast with rate $|\eta|/2$, where $\eta \in (\lambda,0)$. Convergence of the $2-$Wasserstein distance offers new insights into the joint structure and self-dependence of the stochastic two-compartment models, aspects that have not been explored in previous studies.
 
In Section \ref{appliactions}, we apply our results to two specific regulatory mechanisms proposed in \cite{getto2013global}. The authors of \cite{getto2013global} showed that regulating self-renewal is more efficient than regulating division within the ODE framework. By analyzing the stochastic dynamics, we reveal that regulating division is more effective for type $1$ populations, whereas regulating self-renewal is more effective for type $0$ populations. Finally, we investigate the autocorrelation of the fluctuation dynamics and find that the dynamics with regulation on self-renewal exhibits a shorter range of dependence at large times.

\section{Model and Assumptions} \label{model and assumption}
In this section, we establish stochastic dynamics that generalizes deterministic models in \cite{getto2013global}. Let $N_{0}^{(r)}(t)$ be the number of type $0$ cells at time $t$ and $N_{1}^{(r)}(t)$ be the number of type $1$ cells at time $t$ such that for $i \in \{0,1\}$, the initial population sizes $\mathbf{n}^{(r)}$ and density dynamics  $\bar{\mathbf{N}}^{(r)}$ are as follows:
\begin{align*}
    \bar{N}_{i}^{(r)}(t) = \frac{N_{i}^{(r)}(t)}{r}; \quad \bar{N}_{i}^{(r)}(0) = \frac{n_{i}^{(r)}}{r} = \bar{n}_{i}^{(r)} \to \bar{n}_{i} > 0 \text{ as } r\to\infty.
\end{align*}

Denote the division rate by $p(\cdot) > 0$ and self-renewal probability by $a(\cdot) \in (0,1)$. Both parameters are regulated by density of the second compartment. We generalize specific regulatory mechanisms for $p(\cdot)$ and $a(\cdot)$ in \cite{getto2013global} to a class of regulatory mechanisms and identify conditions for global stability of the FLLN dynamics (central tendency of the dynamics).

A type $0$ cell can proliferate (denote this action by $(0,b)$) with rate $p\cdot a$ to increase the type $0$ cell population by $1$, differentiate (denoted by $(0,di)$) with rate $p\cdot (1-a)$ to initiate a clone, or die with rate $\mu$ (denoted $(0,d)$). A type $1$ cell dies at rate $\nu$ (denoted $(1,d)$). Let $\mathcal{A} = \{(0,b),(0,di),(0,d),(1,d)\}$ be the set of actions in the model. 

In the first model, only division is regulated and the dynamics is defined by the following transitions:
\begin{equation} \label{first setup}
\begin{aligned}
    &(N_{0}^{(r)},N_{1}^{(r)}) \to (N_{0}^{(r)}+1,N_{1}^{(r)}) \quad &&\text{ at rate } p(\Bar{N}_{1}^{(r)})\cdot a N_{0}^{(r)} \\
    &(N_{0}^{(r)},N_{1}^{(r)}) \to (N_{0}^{(r)}-1,N_{1}^{(r)}+2) \quad &&\text{ at rate } p(\Bar{N}_{1}^{(r)})\cdot (1-a)N_{0}^{(r)} \\
    &(N_{0}^{(r)},N_{1}^{(r)}) \to (N_{0}^{(r)}-1,N_{1}^{(r)}) \quad &&\text{ at rate } \mu N_{0}^{(r)} \\
    &(N_{0}^{(r)},N_{1}^{(r)}) \to (N_{0}^{(r)},N_{1}^{(r)}-1)  \quad &&\text{ at rate } \nu N_{1}^{(r)}.
\end{aligned}
\end{equation}

We assume $p \in \mathcal{C}^{2}$ and $p'(y)< 0$ for all $y\geq 0$. That is, the division rate decreases as the size of second compartment increases. Additionally, we assume
\begin{align*}
    (2a-1)p(0)>\mu; \quad\lim_{y\to\infty}p(y) =0.
\end{align*}
These conditions are required for uniqueness of FLLN limit's steady state and global stability.

For the second model, only self-renewal probability is regulated, and the stochastic dynamics is defined by
\begin{equation} \label{second setup}
\begin{aligned}
    &(N_{0}^{(r)},N_{1}^{(r)}) \to (N_{0}^{(r)}+1,N_{1}^{(r)}) \quad &&\text{ at rate } p\cdot a(\Bar{N}_{1}^{(r)})N_{0}^{(r)} \\
    &(N_{0}^{(r)},N_{1}^{(r)}) \to (N_{0}^{(r)}-1,N_{1}^{(r)}+2) \quad &&\text{ at rate } p\cdot (1-a(\bar{N}_{1}^{(r)}))N_{0}^{(r)} \\
    &(N_{0}^{(r)},N_{1}^{(r)}) \to (N_{0}^{(r)}-1,N_{1}^{(r)}) \quad &&\text{ at rate } \mu N_{0}^{(r)} \\
    &(N_{0}^{(r)},N_{1}^{(r)}) \to (N_{0}^{(r)},N_{1}^{(r)}-1)  \quad &&\text{ at rate } \nu N_{1}^{(r)}.
\end{aligned}
\end{equation}

Similarly, we assume $a \in \mathcal{C}^{2}$ and $a'(y)< 0$ for all $y\geq 0$. Hence, the probability of self-renewal decreases as the number of type $1$ cells increases. Similar to the first model, we assume
\begin{align*}
    (2a(0)-1)p>\mu; \quad\lim_{y\to\infty} a(y) =0.
\end{align*}

For the second model, an additional assumption is required for stability of the FLLN dynamics: Assume for all $y> 0$, 
\begin{align*}
    \frac{d}{dy}\ln(1-a(y)) < \frac{1}{y}.
\end{align*}

\section{Large-scale Approximations}
We now scale up the initial population sizes to obtain the central tendency (FLLN) of density dynamics and fluctuation around the central tendency (FCLT). More specifically, we show global stability for the FLLN dynamics and study asymptotic behaviors for the FCLT dynamics.

\subsection{Functional Law of Large Numbers} \label{FLLN}
Assume $\frac{\mathbf{n}^{(r)}}{r}\to \Bar{\mathbf{n}} > 0$ as $r\to\infty$ and denote the density dynamics as $\Bar{\mathbf{N}}^{(r)}$ where 
\begin{align*}
    \Bar{N}_{0}^{(r)}=\frac{N^{(r)}_{0}}{r}; \quad \Bar{N}_{1}^{(r)}=\frac{N^{(r)}_{1}}{r}. 
\end{align*}

In Theorem $\ref{thm: FLLN}$ (FLLN), we show $\bar{\mathbf{N}}^{(r)}\to \bar{\mathbf{N}}$ almost surely in space $(\mathbb{D}[0,\infty),d_{\infty}^{\circ})^{2}$, where $d_{\infty}^{\circ}$ is the Skorokhod metric on $\mathbb{D}[0,\infty)$. For properties of this space, we refer readers to Section $16$ of \cite{billingsley2013convergence}. The FLLN limit $\bar{\mathbf{N}}$ is a continuous function satisfying the following initial value problem (IVP),
\begin{align*}
    \bar{\mathbf{N}}(t)' = \mathbf{F}(\bar{\mathbf{N}}(t)); \quad \bar{\mathbf{N}}(0) = \bar{\mathbf{n}}.
\end{align*}

For model $(\ref{first setup})$ and model $(\ref{second setup})$, 
\begin{align*}
    \mathbf{F}(x,y) = \begin{pmatrix}
        (2a-1)p(y)x-\mu x \\
        2(1-a)p(y)x-\nu y
    \end{pmatrix} \quad \text{ and } \quad 
    \mathbf{F}(x,y) = \begin{pmatrix}
        (2a(y)-1)p x-\mu x \\
        2(1-a(y))p x-\nu y
    \end{pmatrix}, \text{ respectively.}
\end{align*}

Denote $A^{*} = \nabla\mathbf{F}(\bar{\mathbf{N}^{*}})$ and define $\lambda=\max\{\Re(\sigma(A^{*}))\}$. We show $\lambda< 0$ in Proposition $\ref{prop: local stability}$, implying the FLLN dynamics is locally asymptotically stable. In Theorem $\ref{thm: global stability}$, we show the FLLN dynamics admits a unique non-trivial steady state and prove global stability. We first show the FLLN dynamics cannot have chaos due to Poincaré–Bendixson theorem. Then we rule out the possibilities of any limit cycle, homoclinic cycle, and heteroclinic cycle to conclude global stability. As a consequence, for any fixed $\eta \in (\lambda,0)$ and initial data $\Bar{\mathbf{n}}$,
\begin{equation} \label{eq: global stability}
    \|\Bar{\mathbf{N}}(t)-\Bar{\mathbf{N}}^{*}\|_{2} = o(\exp(-|\eta|t)) \text{ as } t\to\infty.
\end{equation}

\subsection{Functional Central Limit Theorem} \label{FCLT}
Now we center the dynamics with respect to its FLLN dynamics and scale up the difference. Define the scaled difference $\hat{\mathbf{N}}^{(r)} = \sqrt{r}(\Bar{\mathbf{N}}^{(r)}-\Bar{\mathbf{N}})$ and assume that as $r\to\infty$,
\begin{align*}
    \hat{\mathbf{n}}^{(r)}=\sqrt{r}(\Bar{\mathbf{n}}^{(r)}-\Bar{\mathbf{n}})\Rightarrow \hat{\mathbf{n}} \sim N(\mathbf{u},U).
\end{align*}

Using the standard theory in \cite{ethier2009markov}, we show in Theorem $\ref{thm: FCLT}$ that $\hat{\mathbf{N}}^{(r)}\Rightarrow \hat{\mathbf{N}}$. The FCLT limit process $\Hat{\mathbf{N}}$ is a linear diffusion with time-dependent parameters
\[d\hat{\mathbf{N}}(t) = A(t)\hat{\mathbf{N}}(t)dt + \sigma(t)d\mathbf{B}(t); \quad \hat{\mathbf{N}}(0) = \hat{\mathbf{n}}.\] 
For model $(\ref{first setup})$,
\begin{equation*}
\begin{aligned}
    &A(t) = \nabla \mathbf{F}(\Bar{\mathbf{N}}(t)) = \begin{pmatrix}
        (2a-1)p(\bar{N}_{1}(t))-\mu &(2a-1)p'(\bar{N}_{1}(t))\bar{N}_{0}(t) \\
        2(1-a)p(\bar{N}_{1}(t)) &2(1-a)p'(\bar{N}_{1}(t))\bar{N}_{0}(t) - \nu
    \end{pmatrix};\\
    &\sigma(t) =
    \begin{pmatrix}
        \sqrt{p(\Bar{N}_{1}(t))a\Bar{N}_{0}(t)} &-\sqrt{p(\Bar{N}_{1}(t))(1-a)\Bar{N}_{0}(t)} &-\sqrt{\mu \Bar{N}_{0}(t)} &0 \\
        0 &2\sqrt{p(\Bar{N}_{1}(t))(1-a)\Bar{N}_{0}(t)} &0 &-\sqrt{\nu \Bar{N}_{1}(t)}
    \end{pmatrix}.
\end{aligned}
\end{equation*}
For model $2$, we have 
\begin{equation*}
\begin{aligned}
    &A(t) = \nabla \mathbf{F}(\Bar{\mathbf{N}}(t)) = \begin{pmatrix}
        (2a(\bar{N}_{1}(t))-1)p-\mu &2a'(\bar{N}_{1}(t))p\bar{N}_{0}(t) \\
        2(1-a(\bar{N}_{1}(t)))p &-2a'(\bar{N}_{1}(t))p\bar{N}_{0}(t) - \nu
    \end{pmatrix};\\
    &\sigma(t) =
    \begin{pmatrix}
        \sqrt{pa(\Bar{N}_{1}(t))\Bar{N}_{0}(t)} &-\sqrt{p(1-a(\Bar{N}_{1}(t)))\Bar{N}_{0}(t)} &-\sqrt{\mu \Bar{N}_{0}(t)} &0 \\
        0 &2\sqrt{p(1-a(\Bar{N}_{1}(t)))\Bar{N}_{0}(t)} &0 &-\sqrt{\nu \Bar{N}_{1}(t)}
    \end{pmatrix}.
\end{aligned}
\end{equation*}

Since the initial condition $\Hat{\mathbf{n}}$ is Gaussian, $\Hat{\mathbf{N}}$ is a Gauss-Markov process and it can be characterized by its mean function and autocovariance function as in Section $5.6$ of \cite{karatzas2012brownian}. The mean function $\mathbf{m}$ is defined by
\begin{align*}
    \mathbf{m}(t) = \Phi(t)\mathbf{u}; \quad \Phi'(t) = A(t)\Phi(t), \quad \Phi(0) = I.
\end{align*}
Recall that the initial variance is $U = V(0)$. The autocovariance function $\rho$ has the following form:
\begin{align*}
    \rho(s,t) = \Phi(s)[U + \int_{0}^{s\wedge t}\Phi^{-1}(u)\sigma(u)\sigma^{\top}(u)[\Phi^{-1}]^{\top}(u)du]\Phi^{\top}(t); \quad s,t\geq 0.
\end{align*}

\subsection{Convergence of Mean and Variance} \label{fclt eq} 
If $\Bar{\mathbf{n}} = \Bar{\mathbf{N}}^{*}$, the time-dependent coefficients of the limiting linear diffusion become constants.
\begin{align*}
    A^{*} = \begin{cases}
        \begin{pmatrix}
        0 &(2a-1)p'(\bar{N}_{1}^{*})\bar{N}_{0}^{*} \\
        2(1-a)p(\bar{N}_{1}^{*}) &2(1-a)p'(\bar{N}_{1}^{*})\bar{N}_{0}^{*}-\nu
    \end{pmatrix},  & \text{for model } (\ref{first setup}) \\
    \begin{pmatrix}
        0 &2a'(\bar{N}_{1}^{*})p\bar{N}_{0}^{*} \\
        2(1-a(\bar{N}_{1}^{*}))p &-2a'(\bar{N}_{1}^{*})p\bar{N}_{0}^{*} - \nu
    \end{pmatrix}, &\text{for model } (\ref{second setup}).
    \end{cases}
\end{align*}
The diffusion coefficient has the same form for both models:
\begin{align*}
    \sigma^{*} = \begin{pmatrix}
        \sqrt{\mu\bar{N}_{0}^{*}+\frac{\nu\bar{N}_{1}^{*}}{2}} &-\sqrt{\frac{\nu\bar{N}_{1}^{*}}{2}} &-\sqrt{\mu \Bar{N}_{0}^{*}} &0 \\
        0 &2\sqrt{\frac{\nu\bar{N}_{1}^{*}}{2}} &0 &-\sqrt{\nu \Bar{N}_{1}^{*}}
    \end{pmatrix}.
\end{align*}
We assume $A^{*}$ has distinct eigenvalues, which is equivalent to
\begin{align*}
    \begin{cases}
        -2(1-a)(2a-1)p(\bar{N}_{1}^{*})p'(\bar{N}_{1}^{*})\bar{N}_{0}^{*} \neq [(1-a)p'(\bar{N}_{1}^{*})\bar{N}_{0}^{*}-\frac{\nu}{2}]^{2} \text{, for model } (\ref{first setup}); \\
        -4a'(\bar{\mathbf{N}}_{1}^{*})(1-a(\bar{N}_{1}^{*}))p^{2}\bar{N}_{0}^{*} \neq [a'(\bar{N}_{1}^{*})p\bar{N}_{0}^{*}+\frac{\nu}{2}]^{2} \text{, for model } (\ref{second setup}).
    \end{cases}
\end{align*}

In the remainder of this section, we show convergence of the mean and variance functions for both models. Recall that $\lambda = \max\{\Re(\sigma(A^{*}))\}<0$ for both models. The fundamental matrix $\Phi(t)$ for the mean function $\mathbf{m}(\cdot)$ satisfies the matrix differential equation
\begin{align*}
    \Phi'(t) = A^{*}\Phi(t); \quad \Phi(0) = I_{2}.
\end{align*}
Hence, the mean function is $\mathbf{m}(t) = \exp(A^{*}t)\mathbf{u} = O(e^{-|\lambda|t})$ as $t\to\infty$, which converges to $0$ exponentially fast for each entry. The variance function $V(t)$ satisfies
\begin{align*}
    V'(t) = A^{*}V(t) + [A^{*}V(t)]^{\top} + \sigma^{*}[\sigma^{*}]^{\top}; \quad V(0)=U.
\end{align*}

Define $\Sigma^{*} = \sigma^{*}[\sigma^{*}]^{\top}$ and
\begin{align*}
    W(t) = \begin{pmatrix}
        V_{1,1}(t) \\V_{1,2}(t) \\V_{2,2}(t)
    \end{pmatrix}; \quad
    B^{*} = \begin{pmatrix}
        2A_{1,1}^{*} &2A_{1,2}^{*} &0 \\
        A_{2,1}^{*} &A_{1,1}^{*}+A_{2,2}^{*} &A_{1,2}^{*} \\
        0 &2A_{2,1}^{*} &2A_{2,2}^{*}
    \end{pmatrix}; \quad
    S^{*} = \begin{pmatrix}
        \Sigma_{1,1}^{*} \\
        \Sigma_{1,2}^{*} \\
        \Sigma_{2,2}^{*}
    \end{pmatrix}.
\end{align*}
We rewrite the matrix differential equation as a vector differential equation,
\begin{align*}
    &W'(t) = B^{*}W(t)+S^{*}.
\end{align*}
Since the spectrum of $B^{*}$ is $\sigma(B^{*}) = 2\sigma(A^{*})\cup \{Tr(A^{*})\}$, $\max\{\Re(\sigma(B^{*}))\}=2\lambda < 0$. Hence,
\begin{align*}
    W(t) = \exp(B^{*}t)W(0) + \int_{0}^{t}\exp(B^{*}(t-s))S^{*}ds \to -(B^{*})^{-1}S^{*} \text{ as } t\to\infty.
\end{align*}

Define $W^{*} = -(B^{*})^{-1}S^{*}$, then 
\begin{align*}
    &(W(t)-W^{*})' = B^{*}(W(t)-W^{*}) \\
    \Longrightarrow \quad &\|W(t)-W^{*}\|_{2}= O(\exp(-2|\lambda|t))\text{ as } t\to\infty.
\end{align*}
From the definition of $W^{*}$, we deduce the limit $V^{*}=\lim_{t\to\infty}V(t)$ for the variance function satisfies 
\begin{equation}\label{eq: limiting variance}
    A^{*}V^{*}+[A^{*}V^{*}]^{\top}=-\Sigma^{*}.
\end{equation} 

Let $\|\cdot\|_{F}$ denote the Frobenius norm for matrices. In Lemma $\ref{lemma: variance convergence}$, we lift the assumption $\bar{\mathbf{n}} = \bar{\mathbf{N}}^{*}$ and show as $t\to\infty$,
\begin{align*}
    &\mathbf{m}(t) = O(\exp(-|\lambda|t));\\
    &\|V(t)-V^{*}\|_{F}= o(\exp(-|\eta|t)), \quad \forall \eta \in (\lambda,0).
\end{align*}
Notice that the rate of convergence becomes smaller (from $|2\lambda|$ to $|\eta|$) for the variance function.

\section{Large-time Approximation} \label{Wasserstein}
In the previous section, we have shown that the variance function for the FCLT dynamics converges and studied the rate of convergence. In this section, we extend the result by showing that the FCLT dynamics at large times resembles a Gaussian process $\mathbf{G}^{*}$ with a mean function $\mathbf{m}^{*}(t) = \mathbf{0}$ and an autocovariance function $K^{*}(s,t)=V^{*}\exp((t-s)[A^{*}]^{\top})$ for $s\leq t$. Let $T \in \mathbb{N}$ and define $\mathbf{G}_{T}(t)=\Hat{\mathbf{N}}(T+t)$ with mean function $\mathbf{m}_{T}$ and autocovariance function $K_{T}$, where
\begin{equation*}
    \mathbf{m}_{T}(t) = \Phi(T+t)\mathbf{u}; \quad K_{T}(s,t) = \rho(T+s,T+t).
\end{equation*}

We show that $\mathbf{G}_{T}$ converges to $\mathbf{G}^{*}$ as $T\to\infty$. To introduce the notion of convergence, we follow Section $5$ in \cite{minh2023entropic} to establish a one-to-one correspondence between the space of Gaussian processes and the space of Gaussian measures on $\mathcal{H}=L^{2}([0,M],\mathcal{B}([0,M]),Leb)$, where $M>0$ is fixed. $\mathcal{H}$ is a separable Hilbert space. For a Gaussian process with mean function $\mathbf{m}\in \mathcal{H}$ and autocovariance function $K(s,t)$ being continuous, symmetric, and positive-definite, we invoke Mercer's theorem to induce a Gaussian measure $\mathcal{N}(\mathbf{m},C_{K})$ on $\mathcal{H}$. $C_{K}:\mathcal{H}\to\mathcal{H}$ is the covariance operator defined by 
\begin{align*}
    (C_{K})f(s)=\int_{[0,M]}K(s,t)f(t)dt.
\end{align*}
The covariance operator $C_{K}$ is of trace class, self-adjoint, and positive.

Since the mean functions $\mathbf{m}_{T}, \mathbf{m}^{*}\in \mathcal{H}$ and the autocovariance functions $K_{T}, K_{*}$ are continuous, symmetric, and positive-definite, we induce a sequence of Gaussian measures $\mathcal{G}_{T}\sim\mathcal{N}(\mathbf{m}_{T}, C_{T})$ and $\mathcal{G}^{*}\sim\mathcal{N}(\mathbf{0},C_{*})$. The square of the $2-$Wasserstein distance is
\begin{align*}
    W_{2}^{2}(\mathcal{G}_{T},\mathcal{G}_{*}) &= \|\mathbf{m}_{T}\|_{\mathcal{H}}^{2}+Tr(C_{T}+C_{*}-2(C_{*}^{1/2}C_{T}C_{*}^{1/2})^{1/2}).
\end{align*}

In Theorem $\ref{theorem4}$, we show for any fixed $M$, $W_{2}(\mathcal{G}_{T},\mathcal{G}_{*})=o(\exp(\frac{\eta}{2}T))$ as $T\to\infty$ for all $\eta \in (\lambda,0)$. This is done by first applying operator monotonicity for the square root function to obtain an upper bound for the trace term. Let $\|\cdot\|_{Tr}$ denote the trace norm and $\|\cdot\|$ denote the Hilbert-Schmidt norm, we show
\begin{align*}
    \|C_{T}+C_{*}-2(C_{*}^{1/2}C_{T}C_{*}^{1/2})^{1/2}\|_{Tr} \leq \|(C_{T}+C_{*})^{2}-4(C_{*}^{1/2}C_{T}C_{*}^{1/2})\|_{Tr}^{\frac{1}{2}}= \|C_{T}-C_{*}\|_{HS}.
\end{align*}
Denote the space of Hilbert-Schmidt operators on $\mathcal{H}$ by $B_{HS}(\mathcal{H})$, then we have the following isometric isomorphism:
\begin{align*}
    B_{HS}(\mathcal{H}) \cong \mathcal{H}^{*}\otimes \mathcal{H} \cong \mathcal{H}\otimes \mathcal{H} \cong L^{2}([0,M]^{2}; (\mathbb{R}^{2},\|\cdot\|_{2})\otimes(\mathbb{R}^{2},\|\cdot\|_{2})),
\end{align*}
where $\mathcal{H}^{*}$ is the dual space of $\mathcal{H}$. The last norm is on the space of symmetric matrix functions, and we use convergence results in Section $\ref{fclt eq}$ to conclude exponential convergence of the $2-$Wasserstein distance.

To interpret this result, we focus on $\Hat{\mathbf{N}}$ in a moving interval $[T,T+M]$. As we increase $T\to\infty$, the dynamics of $\Hat{\mathbf{N}}$ will become more and more similar to that of $\mathbf{G}^{*}$. The length of the interval $M$ may be arbitrarily large.

\section{Applications} \label{appliactions}
We apply our framework to study the stochastic dynamics of two regulatory mechanisms proposed in \cite{getto2013global}. Since all the results rely on $r\to \infty$, we fix a large $r = R = 2000$ to approximate limiting results in this section. Let $\kappa >0$ be the strength of the regulatory mechanism. As in Section $\ref{model and assumption}$, we require $p > 0$ and $a \in (0,1)$. We set up two regulatory mechanisms:
\begin{align*}
    &\textbf{model $(\ref{first setup})$: } p(y) = \frac{p}{1+\kappa y}; \qquad \textbf{model $(\ref{second setup})$: } a(y) = \frac{a}{1+\kappa y}.
\end{align*}
Using parameters in Table $\ref{table:1}$, it is easy to verify both models satisfy conditions in Section $\ref{model and assumption}$.

\renewcommand{\arraystretch}{1.2}
\begin{table}[!h]
\centering
\begin{tabular}{|p{1.2cm}|p{1cm}|p{1cm}|p{0.8cm}|p{0.8cm}|p{1.8cm}|p{1.2cm}|}
    \hline
    \multicolumn{7}{|c|}{Model Parameters} \\
    \hline
    &$p$ &$a$ &$\mu$ &$\nu$ &$\kappa$ &$-\lambda$ \\
    \hline 
    Model $1$ &0.827  &0.589 &0.1 &2.3 &$1.19\cdot 10^{-9}\cdot R$ &0.024  \\ 
    Model $2$ &0.56  &0.87  &0.1 &2.3 &$1.19\cdot 10^{-9}\cdot R$ &0.618  \\
    \hline
\end{tabular}
\caption{$\lambda$ is the real part of the dominating eigenvalue for matrix $A^{*}$, representing the efficiency of the regulatory mechanism.}
\label{table:1}
\end{table}
As we see in Table $\ref{table:1}$, $|\lambda|$ for model $2$ is larger than that of model $1$, indicating that regulating self-renewal is more efficient. This conclusion is also supported by Figures $\ref{model1density}$ and $\ref{model2density}$. Under identical theoretical steady states and initial conditions, model $(\ref{first setup})$ takes approximately 200 days to reach the steady state, while model $(\ref{second setup})$ stabilizes in about 10 days. Hence, the rate of convergence provides insight into the regulated parameter. However, to estimate the rate of converge, one needs to conduct experiments to perturb the dynamics from its steady state and collect time-series data. In the following section, we provide an alternative solution to this problem by investigating the fluctuation of dynamics.

Figure $\ref{variance}$ highlights significant differences in the limiting variances of the two models. The limiting variance for type $0$ cell density in model $1$ is much larger than that of model $2$. In contrast, the limiting variance in model $2$ is slightly larger than that in model $1$ for type $1$ cell density. These observations suggest that estimating variance at the steady state can help identify the regulatory mechanism in action. 

Finally, Figure $\ref{autocorr}$ demonstrates marked differences in the autocorrelation functions between the two models. Autocorrelation functions represent dependence between one compartment and its past or other compartment's past. It is defined by 
\begin{align*}
    Corr(\hat{N}_{i}(s),\hat{N}_{j}(s+t)) = \frac{K_{i,j}^{*}(s,s+t)}{\sqrt{V^{*}_{i,i}V^{*}_{j,j}}} = \frac{K_{i,j}^{*}(0,t)}{\sqrt{V^{*}_{i,i}V^{*}_{j,j}}}.
\end{align*}
All autocorrelation functions decay to zero as $t\to\infty$, implying fluctuations of type $0$ and type $1$ densities will eventually ``forget'' their history. Autocorrelation functions in model $2$ decay much faster than that in model $1$, suggesting that dynamics with self-renewal regulation have a shorter range of dependence. Additionally, correlations in model $2$ can be negative, while all correlations in model $1$ are positive. These distinctions provide another dimension for differentiating regulatory mechanisms.

\begin{figure}[!h]
    \centering
    \includegraphics[width=\textwidth]{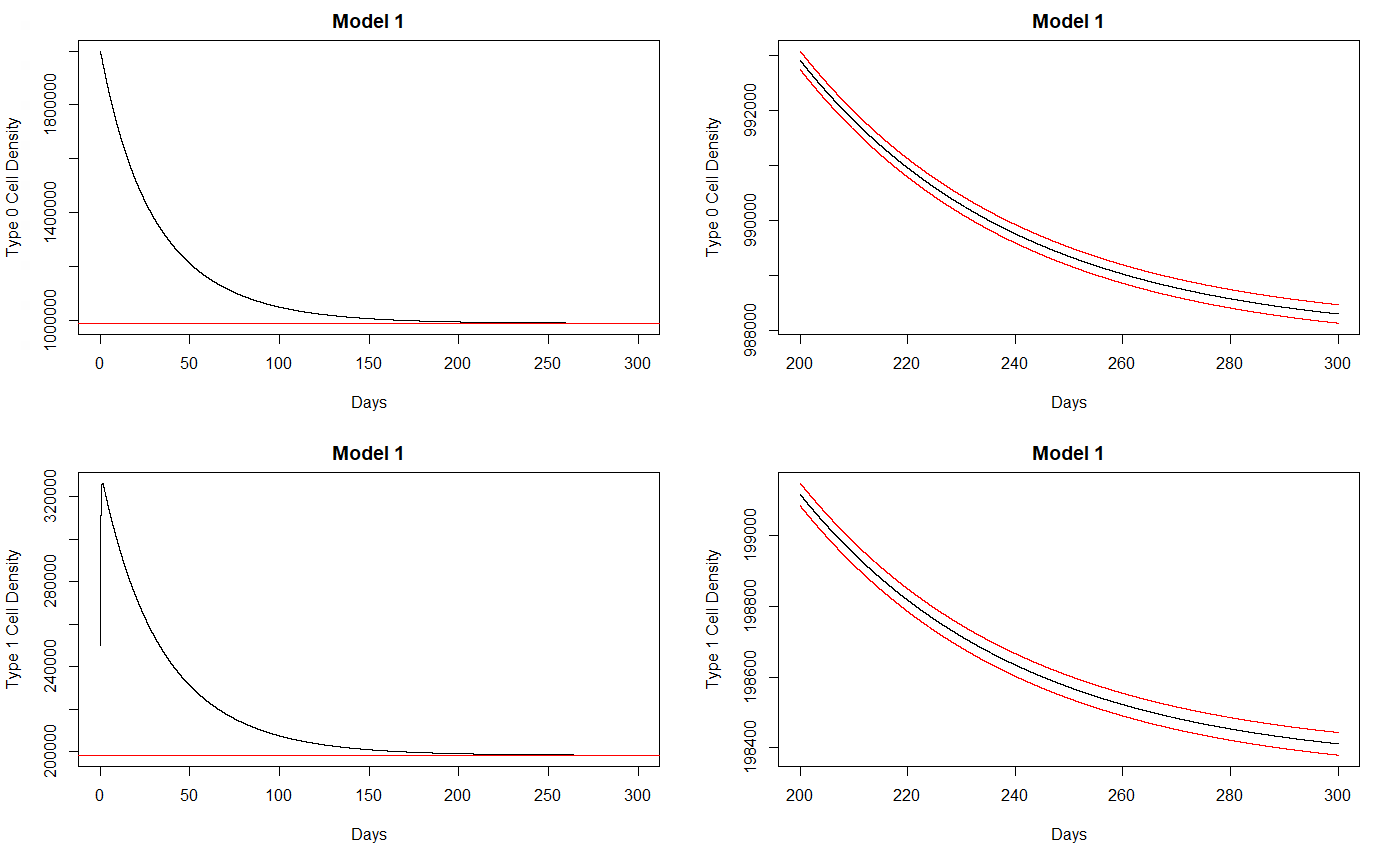}
    \caption{FLLN dynamics for model $1$. Left column: density dynamics (black) and theoretical steady states (red). Right column: density dynamics and $95\%$ confidence band.}
    \label{model1density}
\end{figure}

\begin{figure}[!h]
    \centering
    \includegraphics[width=\textwidth]{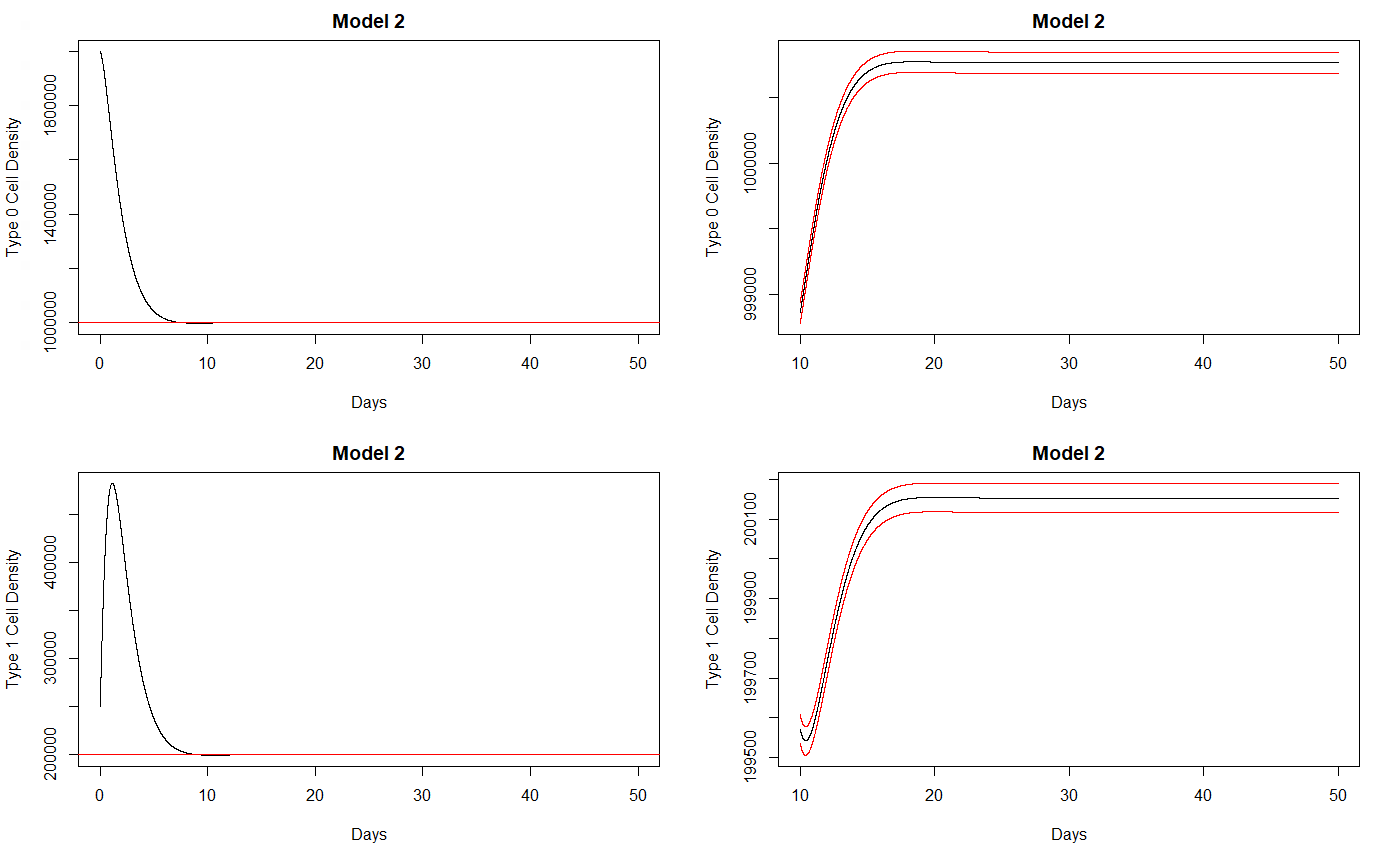}
    \caption{FLLN dynamics for model $2$. Left column: density dynamics (black) and theoretical steady states (red). Right column: density dynamics and $95\%$ confidence band.}
    \label{model2density}
\end{figure}

\begin{figure}[!h]
    \centering
    \includegraphics[width=\textwidth]{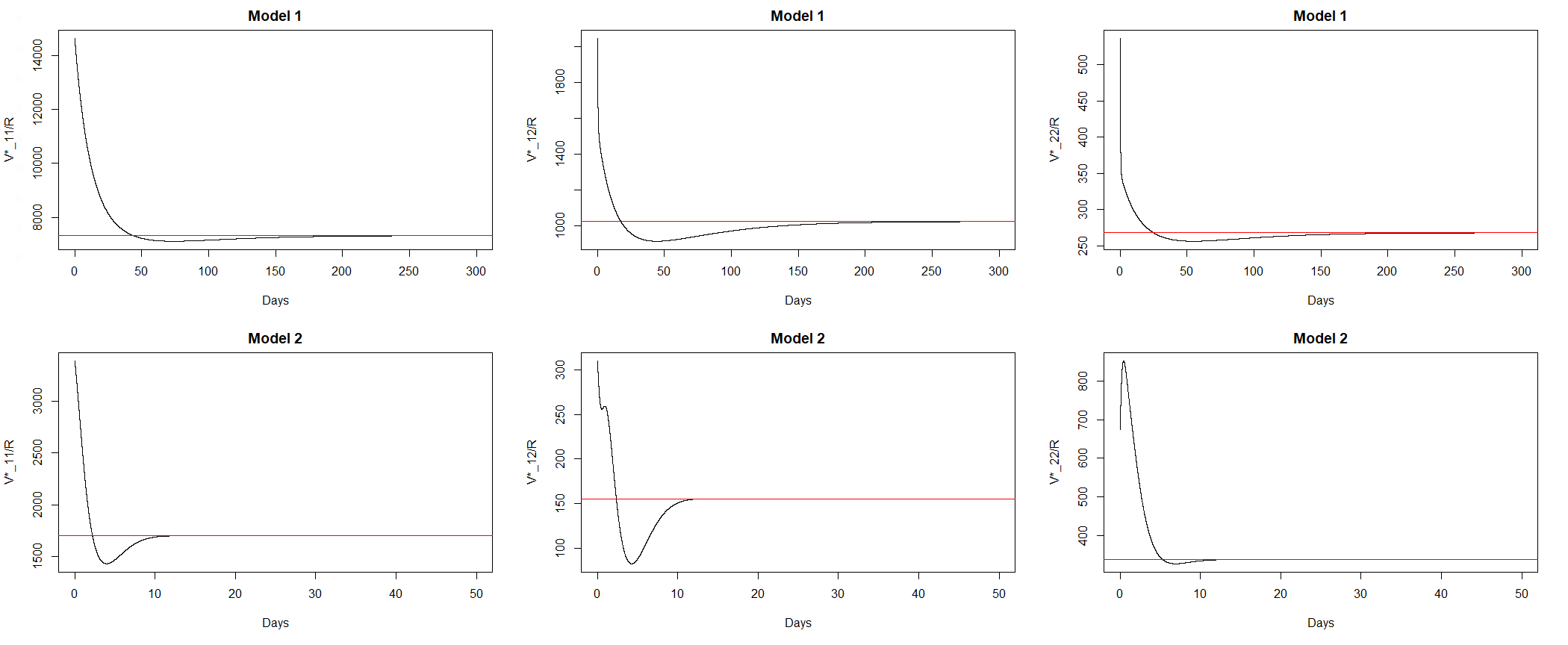}
    \caption{Variance function for both models with initial conditions $\mathbf{u} = \mathbf{0}$ and $U = 2\cdot V^{*}$. Red lines are components in the theoretical limiting variance $V^{*}$.}
    \label{variance}
\end{figure}

\begin{figure}[!h]
    \centering
    \includegraphics[width=\textwidth]{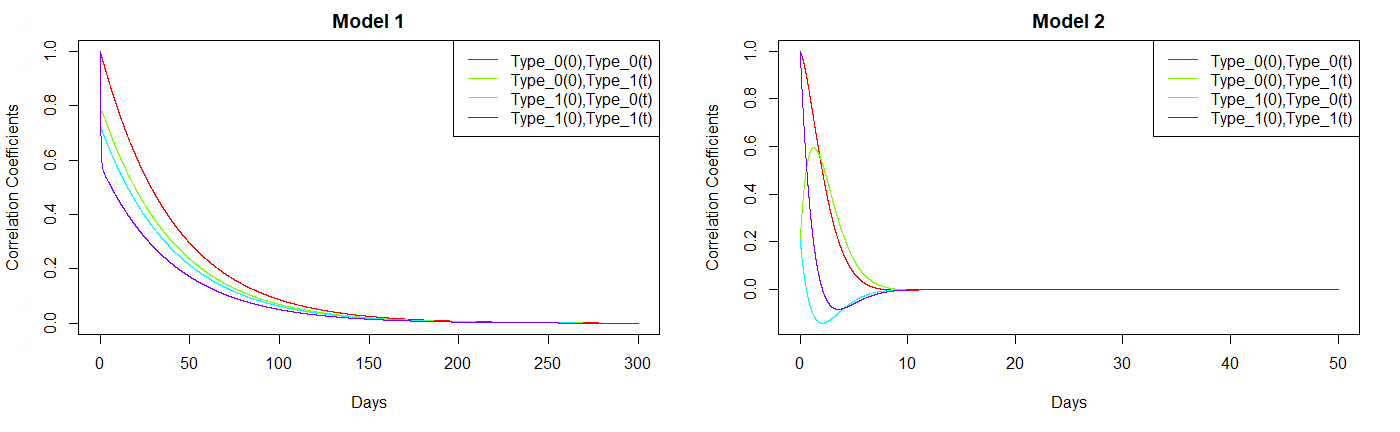}
    \caption{Autocorrelation function for both models. Since for $s \leq t$, the autocovaraince function $K^{*}(s,t)$ depends on $s$ and $t$ through $t-s$, it suffices to plot the autocorrelation function between $\hat{N}_{i}(0)$ and $\hat{N}_{j}(t)$, $i,j \in \{0,1\}$.}
    \label{autocorr}
\end{figure}

\clearpage

\section{Discussions} \label{discussion}
In this paper, we model hematopoiesis with regulatory mechanisms as a state-dependent branching process with exponential lifetimes, making the process Markovian. While many cell proliferation systems exhibit non-exponential lifetimes, the use of exponential distributions may be justified in specific contexts. Laboratory studies show that sojourn times in various cell cycle phases are finite, and some bacteria exhibit nearly synchronous cell divisions, as extensively documented in \cite{kimmel2015branching}. However, the dynamics differ in cancer and physiological control systems. For instance, lung cancer cell populations double on the order of weeks \cite{gorlova2005estimating}, and hematopoietic stem cells may take even longer \cite{catlin2011replication}. Much of this time is spent awaiting ``permission'' to divide (in healthy cells) or undergoing multiple rounds of DNA repair (in cancer cells). This part of the cell cycle is plausibly modeled by exponential distributions, although in vivo measurements are challenging. Notably, models of physiological stem cell systems based on exponential cell lifetimes, which correspond to ordinary differential equations for expected values, have demonstrated significant accuracy. For example, such models effectively describe telomere shortening in hematopoietic stem cells, as supported by experimental and theoretical studies in \cite{sidorov2009leukocyte}.

We analyze stochastic dynamics of a given regulatory mechanism on the two-compartment model. We first scale up initial population sizes to study the large-scale approximation (FLLN and FCLT) of the dynamics. Then, we pass with time to infinity to study the long-time asymptotic behavior of FLLN and FCLT. We study the stability property of the FLLN and FCLT dynamics. More specifically, we show global stability for the non-trivial steady state and study the rate of convergence to the steady state. For the FCLT dynamics, we show exponential convergence in $2-$Wasserstein metric to a stationary Gauss-Markov process. To interpret the results, we view the rate of convergence of FLLN to the non-trivial steady state as the efficiency of the regulatory mechanism. The FCLT limit at large times represents fluctuation around equilibrium and we associate the limiting variance of fluctuation with effectiveness of the regulatory mechanism. For a regulatory mechanism to be effective, the limiting variance should be small. 

For two specific regulatory mechanisms in \cite{getto2013global}, the authors already show regulating self-renewal is more efficient than regulating division. Hence, regulating self-renewal protects the dynamics against perturbation from its steady state. Building on the work of \cite{getto2013global}, we provide a more detailed comparison between these two regulatory strategies. Specifically, we show that regulating division is more effective for type $1$ cells, while regulating self-renewal is more effective for type $0$ cells. Since the primary goal of hematopoiesis is to produce mature cells to sustain biological functions, regulating division ensures a steadier supply of these cells.

We conclude by mentioning possible extensions for our framework. In the present paper, we adopt the assumptions in \cite{getto2013global} and only investigate cases when one parameter is regulated. Standard theory in \cite{ethier2009markov} for deriving FLLN and FCLT dynamics works for more general regulatory mechanisms. For instance, it is possible to consider regulating both division and self-renewal. Naturally, proving global stability of FLLN dynamics becomes harder for more general regulatory mechanisms. Notice that the conditions in Section $\ref{model and assumption}$ are imposed to guarantee stability of the FLLN and FCLT dynamics. Depending on the biological context, the instability of FLLN and FCLT dynamics might be of interest. Finally, extension to multi-compartment models is conceptually straightforward, but complication levels up as the number of compartment increases.

\section{Acknowledgments}
Ren-Yi Wang was supported by the Department of Statistics at Rice University and by the NIH grant R01 268380 to Wenyi Wang. Marek Kimmel was supported in part by the NSF DMS RAPID COLLABORATION grant 2030577 to  Marek Kimmel and Simon Tavar\'e, and by the National Science Center (Poland) grant number 2021/41/B/NZ2/04134 to Marek Kimmel. We thank the editors and reviewers for their precious comments on this paper.

\section{Conflict of interest}
The authors have no competing interests.

\section{Data Availability Statements}
Data sharing not applicable to this article as no datasets were generated or analyzed during the current study.

\clearpage
\appendix
\section{Appendix}
In this appendix, we let $\|\cdot\|_{F}$ denote Frobenius norm for matrices and $\|\cdot\|_{2}$ denote the Euclidean norm for vectors. We use $``\Rightarrow"$ for convergence in distribution (weak convergence) and $``\Longrightarrow"$ for implication. For properties of Skorokhod space $(\mathbb{D}[0,\infty),d_{\infty}^{\circ})$, we refer readers to Section $16$ of \cite{billingsley2013convergence}.
\subsection{Construction of Dynamics}
In this section, we construct both models using the Poisson process construction in \cite{ethier2009markov}. 

Denote the division rate by $p(\cdot) > 0$ and self-renewal probability by $a(\cdot) \in (0,1)$. Both parameters are regulated by density of the second compartment. A type $0$ cell can proliferate (denote this action by $(0,b)$) with rate $p\cdot a$ to increase the type $0$ cell population by $1$, differentiate (denoted by $(0,di)$) with rate $p\cdot (1-a)$ to initiate a clone, or die with rate $\mu$ (denoted $(0,d)$). A type $1$ cell dies at rate $\nu$ (denoted $(1,d)$). Let $\mathcal{A} = \{(0,b),(0,di),(0,d),(1,d)\}$ be the set of actions in the model. Let $\{P_{a}^{(r)}\mid r\in\mathbb{N},a\in\mathcal{A}\}$ be a set of independent rate $1$ Poisson processes and $n_{i}^{(r)}\in \mathbb{N}$ for $r \in \mathbb{N}$ and $i \in \{0,1\}$. The construction of stochastic dynamics corresponding to model $(\ref{first setup})$ is
\begin{align*}
    &N_{0}^{(r)}(t) = n^{(r)}_{0}- P^{(r)}_{(0,d)}\left(\int_{0}^{t}\mu N_{0}^{(r)}(s)ds\right) - P^{(r)}_{(0,di)}\left(\int_{0}^{t}p(\bar{N}_{1}^{(r)}(s))\cdot(1-a)N_{0}^{(r)}(s)ds\right) \\
    &\qquad \qquad +P^{(r)}_{(0,b)}\left(\int_{0}^{t}p(\bar{N}_{1}^{(r)}(s))\cdot a N_{0}^{(r)}(s)ds\right) \\
    &N_{1}^{(r)}(t) = n^{(r)}_{1} - P^{(r)}_{(1,d)}\left(\int_{0}^{t}\nu N_{1}^{(r)}(s)ds\right) + 2\cdot P^{(r)}_{(0,di)}\left(\int_{0}^{t}p(\bar{N}_{1}^{(r)}(s))\cdot(1-a) N_{0}^{(r)}(s)ds\right).
\end{align*}
The construction for model $(\ref{second setup})$ is
\begin{align*}
    &N_{0}^{(r)}(t) = n^{(r)}_{0}- P^{(r)}_{(0,d)}\left(\int_{0}^{t}\mu N_{0}^{(r)}(s)ds\right) - P^{(r)}_{(0,di)}\left(\int_{0}^{t}p\cdot (1-a(\bar{N}_{1}^{(r)}(s)))N_{0}^{(r)}(s)ds\right) \\
    &\qquad \qquad +P^{(r)}_{(0,b)}\left(\int_{0}^{t}p\cdot a(\bar{N}_{1}^{(r)}(s))N_{0}^{(r)}(s)ds\right) \\
    &N_{1}^{(r)}(t) = n^{(r)}_{1} - P^{(r)}_{(1,d)}\left(\int_{0}^{t}\nu N_{1}^{(r)}(s)ds\right) + 2\cdot P^{(r)}_{(0,di)}\left(\int_{0}^{t}p\cdot (1-a(\bar{N}_{1}^{(r)}(s))N_{0}^{(r)}(s)ds\right).
\end{align*}

\subsection{FLLN and FCLT}
For results in this section, we assume 
\begin{align*}
    &\lim_{r\to\infty}\frac{\mathbf{n}^{(r)}}{r} = \bar{\mathbf{n}} > 0 \quad \text{ (entrywise)}; \\
    &\sqrt{r}(\frac{\mathbf{n}^{(r)}}{r}-\bar{\mathbf{n}}) \Rightarrow \hat{\mathbf{n}} \sim N(\mu, U) \quad \text{ as } r\to\infty.
\end{align*}
The FLLN-scaled dynamics $\bar{\mathbf{N}}^{(r)}$ and FCLT-scaled dynamics $\hat{\mathbf{N}}^{(r)}$ are defined by
\begin{align*}
\bar{\mathbf{N}}^{(r)} = \frac{{\mathbf{N}}^{(r)}}{r}\quad  \text{ and } \quad \hat{\mathbf{N}}^{(r)} = \sqrt{r}(\bar{\mathbf{N}}^{(r)}-\bar{\mathbf{N}}), \,\text{ respectively}.
\end{align*}
In addition, we define
\begin{align*}
    &\mathbf{F}(x,y) = \begin{pmatrix}
        (2a-1)p(y)x-\mu x \\
        2(1-a)p(y)x-\nu y
    \end{pmatrix} \quad \text{ for model $(\ref{first setup})$;} \\
    &\mathbf{F}(x,y) = \begin{pmatrix}
        (2a(y)-1)p x-\mu x \\
        2(1-a(y))p x-\nu y   \end{pmatrix}\quad \text{ for model $(\ref{second setup})$.}
\end{align*}
In this section, we derive limits of $\bar{\mathbf{N}}^{(r)}$ and $\hat{\mathbf{N}}^{(r)}$ as $r\to\infty$ using standard tools in \cite{ethier2009markov}.

\begin{theorem} \label{thm: FLLN}
    For both models, $\lim_{r\to\infty}\Bar{\mathbf{N}}^{(r)} \to \Bar{\mathbf{N}}$ almost surely in the Skorohod space $(\mathbb{D}[0,\infty),d_{\infty}^{\circ})^{2}$, where $\Bar{\mathbf{N}}' = \mathbf{F}(\Bar{\mathbf{N}})$ and $\Bar{\mathbf{N}}(0) = \Bar{\mathbf{n}}$.
\end{theorem}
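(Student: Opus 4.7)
The plan is to apply the standard Kurtz-type density-dependent convergence argument (Chapter 11 of Ethier--Kurtz) to the Poisson process construction given in the previous subsection. Writing the construction in the compact form
\[
\bar{\mathbf{N}}^{(r)}(t) = \bar{\mathbf{n}}^{(r)} + \int_{0}^{t} \mathbf{F}\bigl(\bar{\mathbf{N}}^{(r)}(s)\bigr)\, ds + \mathbf{M}^{(r)}(t),
\]
where
\[
\mathbf{M}^{(r)}(t) = \sum_{a \in \mathcal{A}} \frac{\nu^{a}}{r}\, \tilde{P}_{a}^{(r)}\!\left(r \int_{0}^{t} \beta_{a}\bigl(\bar{\mathbf{N}}^{(r)}(s)\bigr)\, ds \right),
\]
$\nu^{a}$ are the jump vectors, $\beta_{a}$ the per-density rates, and $\tilde{P}_{a}^{(r)}$ the compensated Poisson processes. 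Comparing with the ODE $\bar{\mathbf{N}}(t) = \bar{\mathbf{n}} + \int_{0}^{t} \mathbf{F}(\bar{\mathbf{N}}(s))\, ds$, the proof reduces to uniform control of the martingale error and a Gronwall step.

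First I would establish an a priori bound. Under the assumptions of Section~\ref{model and assumption}, $p(\cdot)$ (resp.\ $a(\cdot)$) is continuous and bounded on $[0,\infty)$, so every rate function $\beta_{a}$ is dominated by $C(\bar{N}_{0}^{(r)} + \bar{N}_{1}^{(r)})$. Introducing the stopping time $\tau_{K}^{(r)} = \inf\{t: \|\bar{\mathbf{N}}^{(r)}(t)\| \geq K\}$ and pathwise comparing $\bar{\mathbf{N}}^{(r)}(\cdot \wedge \tau_{K}^{(r)})$ with a linear birth-and-death branching process yields $\mathbb{E}\sup_{t \leq T} \bar{N}_{i}^{(r)}(t \wedge \tau_{K}^{(r)}) \leq C_{T}$, uniformly in $r$ and $K$. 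Letting $K \to \infty$ and using Markov's inequality rules out explosion in finite time. Next, on $\{t \leq \tau_{K}^{(r)}\}$ the predictable quadratic variation of $\mathbf{M}^{(r)}$ is bounded by $C_{K} t / r$, so Doob's $L^{2}$ inequality plus Borel--Cantelli along $r \to \infty$ gives $\sup_{t \leq T} \|\mathbf{M}^{(r)}(t \wedge \tau_{K}^{(r)})\| \to 0$ almost surely (equivalently, one may invoke the functional SLLN for Poisson processes, $\sup_{u \leq U} r^{-1}|\tilde{P}_{a}^{(r)}(r u)| \to 0$ a.s.).

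Next I would apply Gronwall. On the set $\{\|\mathbf{x}\| \leq K\}$ the vector field $\mathbf{F}$ is $C^{1}$ (since $p,a \in \mathcal{C}^{2}$), hence Lipschitz with some constant $L_{K}$. Subtracting the integral equations yields
\[
\sup_{t \leq T \wedge \tau_{K}^{(r)}} \|\bar{\mathbf{N}}^{(r)}(t) - \bar{\mathbf{N}}(t)\| \leq e^{L_{K} T}\Bigl( \|\bar{\mathbf{n}}^{(r)} - \bar{\mathbf{n}}\| + \sup_{t \leq T} \|\mathbf{M}^{(r)}(t \wedge \tau_{K}^{(r)})\| \Bigr) \xrightarrow{\text{a.s.}} 0.
\]
Choosing $K > \sup_{t \geq 0}\|\bar{\mathbf{N}}(t)\| + 1$ (finite, since the FLLN limit is bounded along the trajectory approaching the equilibrium $\bar{\mathbf{N}}^{*}$), the a.s.\ convergence above forces $\tau_{K}^{(r)} > T$ for all sufficiently large $r$, so the localisation can be removed and we obtain $\sup_{t \leq T} \|\bar{\mathbf{N}}^{(r)}(t) - \bar{\mathbf{N}}(t)\| \to 0$ a.s. Since $\bar{\mathbf{N}}$ is continuous, this is convergence in the Skorohod metric $d_{T}$ for every $T$, which by dominated convergence in the definition $d_{\infty}(x,y) = \int_{0}^{\infty} e^{-T}(d_{T}(x,y) \wedge 1)\, dT$ gives convergence in $(\mathbb{D}[0,\infty), d_{\infty})$.

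The main obstacle is the interplay between localisation and the global-in-time conclusion: the Lipschitz constant $L_{K}$ of $\mathbf{F}$ depends on the cutoff $K$, so if one fixed $K$ before studying the ODE the argument would stall. The resolution, exploited above, is that the deterministic trajectory $\bar{\mathbf{N}}$ lives in a compact subset of $(0,\infty)^{2}$ forever (by the global stability result quoted in Section~\ref{FLLN}), which allows us to pick $K$ large enough once and for all and then let $r \to \infty$. No further estimate is needed to pass from compact intervals to $[0,\infty)$ because the metric $d_{\infty}$ is explicitly an exponentially weighted integral of $d_{T}$.
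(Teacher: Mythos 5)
Your proposal is correct in substance, but it takes a more self-contained route than the paper, which simply verifies the hypotheses of Theorem 11.2.1 of Ethier--Kurtz: the jump rates are bounded on compact sets (because $p$, resp.\ $a$, is bounded, being decreasing and positive) and $\mathbf{F}$ is locally Lipschitz since $p,a\in\mathcal{C}^{2}$; the almost sure convergence in $(\mathbb{D}[0,\infty),d_{\infty})$ is then quoted from that theorem. You instead unpack the black box: the martingale decomposition of the Poisson construction, an a priori bound and non-explosion via comparison with a linear birth process, control of the compensated-Poisson error, a Gronwall step, and removal of the localisation. Two remarks on the details. First, for a fixed horizon $T$ you only need $\sup_{t\le T}\|\bar{\mathbf{N}}(t)\|<\infty$, which already follows from the at-most-linear growth of $\mathbf{F}$; invoking the global stability of $\bar{\mathbf{N}}^{*}$ is unnecessary (and best avoided, since that result is established after the FLLN in the paper, even though there is no actual circularity). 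Second, the step ``Doob's $L^{2}$ inequality plus Borel--Cantelli along $r\to\infty$'' does not by itself give almost sure convergence: the resulting tail bound is of order $1/r$, which is not summable. To get the a.s.\ statement you need either exponential (Chernoff-type) martingale bounds, a subsequence argument, or---as in your parenthetical and as in Ethier--Kurtz's own proof---the functional strong law $\sup_{u\le U}r^{-1}|\tilde{P}_{a}(ru)|\to 0$ a.s.\ for a fixed unit Poisson process, which also makes explicit that the almost sure convergence refers to the coupling in which the same Poisson processes drive all $r$. With that route chosen, your Gronwall argument, the de-localisation, and the passage from uniform convergence on compacts to convergence in $d_{\infty}$ are exactly right.
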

\begin{proof}
    We verify the conditions in Theorem $11.2.1$ in \cite{ethier2009markov}. For model $(\ref{first setup})$,
    \begin{align*}
        \mathbf{F}(x,y) = 
        \begin{pmatrix}
            1 \\ 0
        \end{pmatrix}p(y)ax  +
        \begin{pmatrix}
            -1 \\ 2
        \end{pmatrix}p(y)(1-a)x +
        \begin{pmatrix}
            -1 \\ 0
        \end{pmatrix}\mu x + 
        \begin{pmatrix}
            0 \\ -1
        \end{pmatrix}\nu y.
    \end{align*}

    Since $p$ is a continuous function, for any fixed compact set $K\in \mathbb{R}^{2}$, we have
    \begin{align*}
    \|\begin{pmatrix}
            1 \\ 0
        \end{pmatrix}\|\sup_{(x,y)\in K}\{p(y)ax\}  +
        \|\begin{pmatrix}
            -1 \\ 2
        \end{pmatrix}\|\sup_{(x,y)\in K}\{p(y)(1-a)x\} \\
        +\|\begin{pmatrix}
            -1 \\ 0
        \end{pmatrix}\|\sup_{(x,y)\in K}\{\mu x\} + 
        \|\begin{pmatrix}
            0 \\ -1
        \end{pmatrix}\|\sup_{(x,y)\in K}\{\nu y\}< \infty.
    \end{align*}

    Hence, condition $(2.6)$ of Theorem $11.2.1$ in \cite{ethier2009markov} is satisfied. In addition, since $\mathbf{F}$ is $\mathcal{C}^{2}$, it is locally Lipschitz, so condition $(2.7)$ of Theorem $11.2.1$ in \cite{ethier2009markov} holds. This concludes $\bar{\mathbf{N}}^{(r)}\Rightarrow \bar{\mathbf{N}}$ almost surely in $(\mathbb{D}[0,\infty),d_{\infty})^{2}$. For model $(\ref{second setup})$, we argue analogously.
\end{proof}

\begin{theorem}\label{thm: FCLT}
    For both models, $\Hat{\mathbf{N}}^{(r)}\Rightarrow \Hat{\mathbf{N}}$, where \[d\hat{\mathbf{N}}(t) = \nabla\mathbf{F}(\bar{\mathbf{N}}(t))\hat{\mathbf{N}}(t)dt + \sigma(t)d\mathbf{B}(t)\] and $\hat{\mathbf{N}}(0)\sim N(\mu,U)$. $\mathbf{B}$ is a $4-$dimensional standard Brownian motion. For model $(\ref{first setup})$,
    \begin{align*}
        \sigma(t) =
        \begin{pmatrix}
        \sqrt{p(\Bar{N}_{1}(t))a\Bar{N}_{0}(t)} &-\sqrt{p(\Bar{N}_{1}(t))(1-a)\Bar{N}_{0}(t)} &-\sqrt{\mu \Bar{N}_{0}(t)} &0 \\
        0 &2\sqrt{p(\Bar{N}_{1}(t))(1-a)\Bar{N}_{0}(t)} &0 &-\sqrt{\nu \Bar{N}_{1}(t)}
    \end{pmatrix}.
    \end{align*}

    For model $(\ref{second setup})$,
    \begin{align*}
        \sigma(t) =
        \begin{pmatrix}
        \sqrt{pa(\Bar{N}_{1}(t))\Bar{N}_{0}(t)} &-\sqrt{p(1-a(\Bar{N}_{1}(t)))\Bar{N}_{0}(t)} &-\sqrt{\mu \Bar{N}_{0}(t)} &0 \\
        0 &2\sqrt{p(1-a(\Bar{N}_{1}(t)))\Bar{N}_{0}(t)} &0 &-\sqrt{\nu \Bar{N}_{1}(t)}
    \end{pmatrix}.
    \end{align*}
\end{theorem}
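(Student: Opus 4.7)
The plan is to apply the density-dependent chain FCLT (cf.\ Theorem~11.2.3 of \cite{ethier2009markov}), whose hypotheses overlap those already checked in the FLLN proof. Using the Poisson construction of the appendix, for model (\ref{first setup}) I rewrite the dynamics compactly as
\begin{equation*}
\mathbf{N}^{(r)}(t) = \mathbf{n}^{(r)} + \sum_{a\in\mathcal{A}} l_a\, P^{(r)}_a\!\left(\int_0^t r\,\beta_a(\bar{\mathbf{N}}^{(r)}(s))\,ds\right),
\end{equation*}
with jump vectors $l_{(0,b)}=(1,0)^\top$, $l_{(0,di)}=(-1,2)^\top$, $l_{(0,d)}=(-1,0)^\top$, $l_{(1,d)}=(0,-1)^\top$ and per-cell rate intensities $\beta_{(0,b)}(x,y)=p(y)ax$, $\beta_{(0,di)}(x,y)=p(y)(1-a)x$, $\beta_{(0,d)}(x,y)=\mu x$, $\beta_{(1,d)}(x,y)=\nu y$, so that $\mathbf{F}=\sum_a l_a \beta_a$. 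Writing $\tilde P^{(r)}_a(u)=P^{(r)}_a(u)-u$, subtracting $r\bar{\mathbf{N}}(t)$ and multiplying by $1/\sqrt r$ yields the decomposition
\begin{equation*}
\hat{\mathbf{N}}^{(r)}(t) = \hat{\mathbf{n}}^{(r)} + \hat{M}^{(r)}(t) + \sqrt{r}\int_0^t\!\bigl[\mathbf{F}(\bar{\mathbf{N}}^{(r)}(s))-\mathbf{F}(\bar{\mathbf{N}}(s))\bigr]ds,
\end{equation*}
where $\hat M^{(r)}(t):=\frac{1}{\sqrt r}\sum_a l_a\,\tilde P^{(r)}_a\!\bigl(\int_0^t r\beta_a(\bar{\mathbf{N}}^{(r)}(s))ds\bigr)$.

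For the martingale part, I combine the FCLT for Poisson processes ($r^{-1/2}\tilde P^{(r)}(r\,\cdot)\Rightarrow W(\cdot)$ in the Skorohod topology) with the FLLN of Theorem~\ref{thm: FLLN} via the continuous mapping theorem applied to the random time change $t\mapsto\int_0^t\beta_a(\bar{\mathbf{N}}^{(r)}(s))ds$. Because this time change is continuous and converges uniformly on compacts to $t\mapsto\int_0^t\beta_a(\bar{\mathbf{N}}(s))ds$, I obtain
\begin{equation*}
\hat M^{(r)}(\cdot) \Rightarrow \sum_{a\in\mathcal{A}} l_a\,W_a\!\left(\int_0^{\cdot}\beta_a(\bar{\mathbf{N}}(s))\,ds\right),
\end{equation*}
which by standard time-change identities equals in law $\int_0^{\cdot}\sigma(s)\,d\mathbf{B}(s)$ with $\sigma(s)\sigma(s)^\top=\sum_a l_a l_a^\top\beta_a(\bar{\mathbf{N}}(s))$. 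A routine computation of the latter sum reproduces the asserted form of $\sigma(t)$, and the same argument for model (\ref{second setup}) (with $\beta_{(0,b)}(x,y)=pa(y)x$ and $\beta_{(0,di)}(x,y)=p(1-a(y))x$) gives the second stated $\sigma$.

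For the drift part, the $C^2$ regularity of $\mathbf{F}$ together with a first-order Taylor expansion yields
\begin{equation*}
\sqrt r\bigl[\mathbf{F}(\bar{\mathbf{N}}^{(r)}(s))-\mathbf{F}(\bar{\mathbf{N}}(s))\bigr] = \nabla\mathbf{F}(\bar{\mathbf{N}}(s))\,\hat{\mathbf{N}}^{(r)}(s) + R^{(r)}(s),
\end{equation*}
with a remainder bounded by $C\|\bar{\mathbf{N}}^{(r)}(s)-\bar{\mathbf{N}}(s)\|\cdot\|\hat{\mathbf{N}}^{(r)}(s)\|$ on compact subsets of the state space. By the FLLN, $\bar{\mathbf{N}}^{(r)}\to\bar{\mathbf{N}}$ uniformly on $[0,T]$, so $R^{(r)}(s)=o_p(1)$ once $\hat{\mathbf{N}}^{(r)}$ is tight. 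Gronwall's inequality applied to the decomposition then gives tightness in $\mathbb{D}([0,\infty);\mathbb{R}^2)$, and any subsequential limit satisfies the linear SDE $d\hat{\mathbf{N}}=\nabla\mathbf{F}(\bar{\mathbf{N}}(t))\hat{\mathbf{N}}dt+\sigma(t)d\mathbf{B}$; since $A(\cdot)$ and $\sigma(\cdot)$ are continuous, pathwise uniqueness pins down the full weak limit.

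The main technical obstacle is controlling $R^{(r)}$ globally in time. I anticipate handling this via a standard localization: define $\tau^{(r)}_K=\inf\{t:\|\hat{\mathbf{N}}^{(r)}(t)\|\ge K\}$, verify convergence of the stopped processes using Gronwall together with a Burkholder--Davis--Gundy moment bound on the compensated Poisson increments, and then let $K\to\infty$ by observing that the limit $\hat{\mathbf{N}}$ is non-explosive (being a linear diffusion with continuous coefficients). The second model is handled identically, as all structural hypotheses $(a\in C^2,\ a>0,\ a'<0)$ from Section~\ref{model and assumption} supply the same smoothness and positivity used above.
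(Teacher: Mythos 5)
Your argument is correct in substance, but it follows a genuinely different route from the paper. The paper's proof is a two-step appeal to Theorem 11.2.3 of \cite{ethier2009markov}: it only verifies the hypotheses of that theorem --- continuity of $\nabla\mathbf{F}$ (from $p,a\in\mathcal{C}^{2}$), continuity of the rate functions, and the compact-set bound $\sum_{a\in\mathcal{A}}\|l_{a}\|^{2}\sup_{K}\beta_{a}<\infty$ --- and adds the remark that the theorem, stated there for a deterministic initial point, extends to the Gaussian initial condition. What you do instead is essentially re-derive that theorem in the present setting: the semimartingale decomposition from the Poisson construction, the time-changed Poisson FCLT combined with the random-time-change/continuous-mapping argument for the martingale part (legitimate here because the limiting time changes are continuous), the Taylor--Gronwall treatment of the drift, and localization plus uniqueness in law of the limiting linear SDE. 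Your route buys self-containedness and makes explicit where the stated $\sigma(t)$ comes from, namely $\sigma(t)\sigma(t)^{\top}=\sum_{a}l_{a}l_{a}^{\top}\beta_{a}(\bar{\mathbf{N}}(t))$, which indeed reproduces both displayed matrices; the cost is that you must supply the standard but nontrivial details (joint convergence of the independent compensated Poisson processes, the BDG/Gronwall tightness estimate, the stopping-time argument), all of which are routine and correctly anticipated in your sketch. One point you should make explicit, and which the paper's proof does flag: $\hat{\mathbf{n}}^{(r)}$ is random with limit $N(\mathbf{u},U)$, so identifying the limit requires joint convergence of $(\hat{\mathbf{n}}^{(r)},\hat{M}^{(r)})$, e.g.\ via independence of the initial data from the driving Poisson processes; your decomposition accommodates this, but as written the step is left implicit.
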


\begin{proof}
    We verify the conditions in Theorem $11.2.3$ of \cite{ethier2009markov}. Although Theorem $11.2.3$ of \cite{ethier2009markov} requires $\hat{\mathbf{N}}(0)$ to be a constant, it holds for Gaussian initial condition as well. Since $p$ and $a$ are $\mathcal{C}^{2}$, $\nabla\mathbf{F}$ is continuous. For model $(\ref{first setup})$, $p(y)ax, p(y)(1-a)x, \mu x$, and $\nu y$ are continuous. Finally,
    \begin{align*}
    \|\begin{pmatrix}
            1 \\ 0
        \end{pmatrix}\|^{2}\sup_{(x,y)\in K}\{p(y)ax\}  +
        \|\begin{pmatrix}
            -1 \\ 2
        \end{pmatrix}\|^{2}\sup_{(x,y)\in K}\{p(y)(1-a)x\} \\
        +\|\begin{pmatrix}
            -1 \\ 0
        \end{pmatrix}\|^{2}\sup_{(x,y)\in K}\{\mu x\} + 
        \|\begin{pmatrix}
            0 \\ -1
        \end{pmatrix}\|^{2}\sup_{(x,y)\in K}\{\nu y\}< \infty.
    \end{align*}

    Hence, $\hat{\mathbf{N}}^{(r)}\Rightarrow \hat{\mathbf{N}}$ for model $(\ref{first setup})$. For model $(\ref{second setup})$, we conclude by the same argument. 
\end{proof}

\subsection{Stability of the FLLN Dynamics}
To guarantee global stability for the FLLN dynamics, we impose the following regularity conditions. For model $(\ref{first setup})$, we assume $p \in \mathcal{C}^{2}$ and $p'(y)< 0$ for all $y\geq 0$. In addition,
\begin{align*}
    (2a-1)p(0)>\mu; \quad\lim_{y\to\infty}p(y) =0.
\end{align*}
For model $(\ref{second setup})$, we assume $a \in \mathcal{C}^{2}$ and $a'(y)< 0$ for all $y\geq 0$. Furthermore, 
\begin{align*}
    &(2a(0)-1)p>\mu; \quad\lim_{y\to\infty} a(y)=0; \\
    &\frac{d}{dy}\ln(1-a(y)) < \frac{1}{y}, \forall y > 0.
\end{align*}

Since the FLLN dynamics is a two-dimensional autonomous system of ODEs, global stability can be established by the Poincaré-Bendixson theorem. We first show the system is locally stable (Proposition $\ref{prop: local stability}$), then we show the forward orbit through the initial data $\Bar{\mathbf{n}}$ lies in a compact set (Proposition $\ref{prop: bounded trajectory}$). Finally, we conclude global stability by ruling out the undesired cases (Theorem $\ref{thm: global stability}$). 

\begin{prop} \label{prop: local stability}
    $\bar{\mathbf{N}}' = \mathbf{F}(\bar{\mathbf{N}})$ has a unique non-trivial steady state $\bar{\mathbf{N}}^{*}$ for both models. $\bar{\mathbf{N}}^{*}$ is locally asymptotically stable.
\end{prop}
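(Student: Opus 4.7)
The plan is to handle existence/uniqueness of the non-trivial equilibrium and its local stability for each model separately, in both cases via a direct IVT-plus-linearization argument. The key observation is that the Jacobian $A^{*}$ always has a zero in its $(1,1)$ entry (because the equation for the type-$0$ compartment at equilibrium forces $(2a-1)p(y^{*})=\mu$ in model \eqref{first setup}, and $(2a(y^{*})-1)p=\mu$ in model \eqref{second setup}), so proving $\operatorname{tr}(A^{*})<0$ and $\det(A^{*})>0$ reduces to checking signs of two entries.

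For model \eqref{first setup}, I would first set $\mathbf{F}(x,y)=\mathbf{0}$. The first coordinate yields either $x=0$ (trivial) or $p(y)=\mu/(2a-1)$. The assumption $(2a-1)p(0)>\mu$ forces $2a-1>0$ and $p(0)>\mu/(2a-1)$; combined with $p\in\mathcal{C}^{2}$, $p'<0$, and $\lim_{y\to\infty}p(y)=0$, the IVT gives a unique $y^{*}>0$ solving this. Substituting into the second coordinate then determines $x^{*}=\nu y^{*}/(2(1-a)p(y^{*}))>0$. Computing the Jacobian at $\bar{\mathbf{N}}^{*}$ and using the equilibrium relation $(2a-1)p(y^{*})=\mu$, the $(1,1)$ entry vanishes. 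Then
\begin{align*}
\operatorname{tr}(A^{*}) &= 2(1-a)p'(y^{*})x^{*}-\nu <0,\\
\det(A^{*}) &= -(2a-1)p'(y^{*})x^{*}\cdot 2(1-a)p(y^{*}) >0,
\end{align*}
since $p'<0$, $p>0$, $a\in(1/2,1)$, $x^{*}>0$. Thus both eigenvalues have strictly negative real parts, and Hartman–Grobman (or the standard linearization theorem for hyperbolic fixed points) yields local asymptotic stability.

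For model \eqref{second setup} the existence/uniqueness argument is analogous: the first equation forces $a(y)=(1+\mu/p)/2$, and the assumptions $(2a(0)-1)p>\mu$, $\lim_{y\to\infty}a(y)=0$, together with $a'<0$ and $a\in\mathcal{C}^{2}$, give a unique $y^{*}>0$ by IVT; then $x^{*}=\nu y^{*}/(2(1-a(y^{*}))p)>0$. Again the $(1,1)$ entry of $A^{*}$ vanishes, and
\begin{align*}
\det(A^{*}) = -2a'(y^{*})p x^{*}\cdot 2(1-a(y^{*}))p >0
\end{align*}
is immediate from $a'<0$. The \emph{main obstacle} is the sign of the trace,
\begin{align*}
\operatorname{tr}(A^{*})=-2a'(y^{*})p x^{*}-\nu,
\end{align*}
whose first summand is positive. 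Here I would substitute the equilibrium identity $2(1-a(y^{*}))p x^{*}=\nu y^{*}$, which turns the condition $\operatorname{tr}(A^{*})<0$ into
\begin{align*}
-\frac{a'(y^{*})\,y^{*}}{1-a(y^{*})} <1.
\end{align*}
This is precisely the standing assumption $\frac{d}{dy}\ln(1-a(y))<1/y$ evaluated at $y=y^{*}$, so trace negativity is guaranteed. With $\operatorname{tr}(A^{*})<0$ and $\det(A^{*})>0$, both eigenvalues again have negative real parts, and local asymptotic stability follows by linearization, completing the proof for model \eqref{second setup}.
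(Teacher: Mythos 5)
Your proposal is correct and follows essentially the same route as the paper's proof: monotonicity plus the boundary assumptions give a unique $y^{*}$ via IVT, the equilibrium relation zeroes out the $(1,1)$ Jacobian entry, and local stability follows from $\operatorname{tr}(A^{*})<0$, $\det(A^{*})>0$, with the assumption $\frac{d}{dy}\ln(1-a(y))<1/y$ (combined with the equilibrium identity $2(1-a(y^{*}))px^{*}=\nu y^{*}$) delivering the trace sign for model \eqref{second setup} exactly as in the paper.
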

\begin{proof}
    We first show that there exists a unique non-trivial steady state $\bar{\mathbf{N}}^{*}>0$ for both models. For model $(\ref{first setup})$, since $(2a-1)p(0) > \mu$ and $p(y) \downarrow 0$ as $y\to\infty$, there exists a unique $\Bar{N}_{1}^{*}> 0$ such that
    \begin{align*}
        (2a-1)p(\bar{N}_{1}^{*}) - \mu = 0.
    \end{align*}

    This implies 
    \begin{align*}
        \bar{N}_{0}^{*} = \frac{\nu \bar{N}_{1}^{*}}{2(1-a)p(\bar{N}_{1}^{*})} > 0.
    \end{align*}

    We then linearize around $\bar{\mathbf{N}}^{*}$ to obtain
    \begin{align*}
        \nabla\mathbf{F}(\bar{\mathbf{N}}^{*}) = \begin{pmatrix}
            0 &(2a-1)p'(\bar{N}_{1}^{*})\bar{N}_{0}^{*} \\
            2(1-a)p(\bar{N}_{1}^{*}) &2(1-a)p'(\bar{N}_{1}^{*})\bar{N}_{0}^{*}-\nu
        \end{pmatrix}.
    \end{align*}
    Since $p'(\bar{N}_{1}^{*})<0$, we have 
    \begin{align*}
        Tr(\nabla\mathbf{F}(\bar{\mathbf{N}}^{*})) < 0; det(\nabla\mathbf{F}(\bar{\mathbf{N}}^{*}))>0.
    \end{align*}

    For model $(\ref{second setup})$, since $(2a(0)-1)p > \mu$ and $a(y) \downarrow 0$ as $y\to\infty$, there exists a unique $\Bar{N}_{1}^{*}> 0$ such that
    \begin{align*}
        (2a(\bar{N}_{1}^{*})-1)p - \mu = 0.
    \end{align*}

    Hence, 
    \begin{align*}
        \bar{N}_{0}^{*} = \frac{\nu \bar{N}_{1}^{*}}{2(1-a(\bar{N}_{1}^{*}))p} > 0.
    \end{align*}

    Linearize around $\bar{\mathbf{N}}^{*}$, we have
    \begin{align*}
        \nabla\mathbf{F}(\bar{\mathbf{N}}^{*}) = \begin{pmatrix}
            0 &2a'(\bar{N}_{1}^{*}) p\bar{N}_{0}^{*} \\
            2(1-a(\bar{N}_{1}^{*}))p &-2a'(\bar{N}_{1}^{*}) p\bar{N}_{0}^{*}-\nu
        \end{pmatrix}.
    \end{align*}

    By assumption, $\frac{d}{dy}\ln(1-a(y)) < \frac{1}{y}$, which is equivalent to 
    \begin{align*}
        -a'(y) < \frac{1-a(y)}{y}.
    \end{align*}

    Therefore, 
    \begin{align*}
        Tr(\nabla\mathbf{F}(\bar{\mathbf{N}}^{*})) &= -2a'(\bar{N}_{1}^{*}) p\bar{N}_{0}^{*}-\nu \\
        &< 2\frac{1-a(\bar{N}_{1}^{*})}{\bar{N}_{1}^{*}}p\bar{N}_{0}^{*}-\nu = 0.
    \end{align*}
    
    Since $a'(\bar{N}_{1}^{*})<0$, 
    \begin{align*}
        det(\nabla\mathbf{F}(\bar{\mathbf{N}}^{*}))>0.
    \end{align*}

    Therefore, the unique non-trivial steady state $\bar{\mathbf{N}}^{*}$ is locally asymptotically stable for both models.
\end{proof}

\begin{prop}\label{prop: bounded trajectory}
    For any fixed initial condition $\Bar{\mathbf{n}}$, the trajectory of the FLLN limit $\{\Bar{\mathbf{N}}(t)\mid t\geq 0\}$ lies in a compact subset $\mathcal{K}\subset\mathbb{R}^{2}$.
\end{prop}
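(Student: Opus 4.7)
The plan is to exhibit a positively invariant compact set $\mathcal{K}\subset\mathbb{R}^{2}_{\geq 0}$ containing $\bar{\mathbf{n}}$, so that the forward orbit is confined to $\mathcal{K}$ and therefore bounded.

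First I would verify positive invariance of $\mathbb{R}^{2}_{\geq 0}$: on $\{\bar{N}_{0}=0\}$ one has $\bar{N}_{0}'=0$, and on $\{\bar{N}_{1}=0\}$ one has $\bar{N}_{1}'=2(1-a)p(0)\bar{N}_{0}\geq 0$ for model~$(\ref{first setup})$ (analogously for model~$(\ref{second setup})$). Combined with the $\mathcal{C}^{2}$ smoothness of $\mathbf{F}$, this yields global existence of the trajectory in $\mathbb{R}^{2}_{\geq 0}$.

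Next I would introduce the affine Lyapunov-type functional $V(x,y)=2(1-a)x+y$ for model~$(\ref{first setup})$ (a mild variant, for instance $V(x,y)=2x+y$, handles model~$(\ref{second setup})$). A direct computation gives
\begin{equation*}
V'(x,y)=2(1-a)\bigl[2a\,p(y)-\mu\bigr]x-\nu y.
\end{equation*}
Because $p(y)\downarrow 0$, I can fix $Y^{*}>\bar{N}_{1}^{*}$ with $2a\,p(Y^{*})\leq\mu/2$; then on the damping region $\{y\geq Y^{*}\}$ one obtains $V'\leq -(1-a)\mu x-\nu y\leq -\rho V$ with $\rho:=\min\{\mu/2,\nu\}>0$, so $V$ decays exponentially there. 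On the complementary excursion region $\{y<Y^{*}\}$ the crude bound is $V'\leq LV$ with $L:=2ap(0)-\mu>0$.

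The main obstacle is to control $V$ across excursions into $\{y<Y^{*}\}$, during which a priori $V$ grows exponentially at rate $L$. The decisive estimate is that whenever $x\geq X_{0}:=\nu Y^{*}/[(1-a)p(Y^{*})]$ and $y<Y^{*}$, one has $y'\geq 2(1-a)p(Y^{*})x-\nu Y^{*}\geq\nu Y^{*}>0$, so the trajectory must re-enter $\{y\geq Y^{*}\}$ within time $\delta=O(1/x)$ and $V$ grows during the excursion by only a factor $e^{L\delta}\to 1$ as $x\to\infty$. In the complementary regime $x<X_{0}$ (with $y<Y^{*}$) one has the automatic bound $V\leq V_{0}:=2(1-a)X_{0}+Y^{*}$. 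The degenerate scenario $y(t)<Y^{*}$ for all $t\geq 0$ is ruled out by the same $y'$ lower bound: an unbounded $x(t)$ would drive $y$ above $Y^{*}$ in finite time, so $x$ is bounded and thus $V$ is bounded. Finally, every sojourn in the damping region begins at $y=Y^{*}$ with $x$ possibly large, during which $x$ decays at rate $\geq\mu/2$; tracking the ratio $V_{\text{exit}}/V_{\text{entry}}$ shows that the sequence of damping-entry values of $V$ contracts, which produces a uniform upper bound $M$ on $V$ depending only on $\bar{\mathbf{n}}$ and the parameters $\mu,\nu,a,p(0),p(Y^{*})$. The forward orbit therefore lies in the compact set $\mathcal{K}:=\{V\leq M\}\cap\mathbb{R}^{2}_{\geq 0}$, and the same scheme with $a(y)\downarrow 0$ in place of $p(y)\downarrow 0$ handles model~$(\ref{second setup})$.
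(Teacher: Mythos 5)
Your route is genuinely different from the paper's: the paper integrates the $\bar{N}_0$-equation explicitly, dominates $\bar{N}_1$ by a monotone comparison ODE, and rules out unboundedness of each coordinate by two separate contradiction arguments, whereas you use the affine functional $V$ together with a damping/excursion decomposition of the quadrant. Your computations are correct up to the final step: $V'=2(1-a)[2a\,p(y)-\mu]x-\nu y$ for model (\ref{first setup}), the bound $V'\le-\rho V$ on $\{y\ge Y^{*}\}$, the bound $V'\le LV$ on $\{y<Y^{*}\}$ (with $L=2ap(0)-\mu>0$ since $2ap(0)>(2a-1)p(0)>\mu$), and the key estimate $y'\ge\nu Y^{*}>0$ when $x\ge X_{0}$ and $y<Y^{*}$ are all fine, as is the analogue for model (\ref{second setup}).

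The gap is the concluding bookkeeping: the assertion that ``tracking the ratio $V_{\text{exit}}/V_{\text{entry}}$ shows that the sequence of damping-entry values of $V$ contracts'' is not proved, and as stated it is not true in general. A trajectory may merely graze $\{y\ge Y^{*}\}$, so the decay accrued during a sojourn there can be arbitrarily small, while after an exit with small $x$ the entry value of $V$ at the next cycle can be larger than at the previous one (e.g.\ when $x$ grows from near $0$ toward $X_{0}$ during a long excursion); so no contraction of entry values follows from your estimates, and without a substitute the uniform bound $M$ is not established. Fortunately your own quantities close the gap with no contraction argument at all. At any downward crossing of $\{y=Y^{*}\}$ one must have $y'\le0$ there, i.e.\ $x\le\nu Y^{*}/(2(1-a)p(Y^{*}))=X_{0}/2$, so $V\le V_{0}$ at every exit from the damping region; inside an excursion $V\le V_{0}$ automatically whenever $x\le X_{0}$; and if $x(t)>X_{0}$ at some excursion time $t$, let $s$ be the last time $x\le X_{0}$ (or the initial time): on $(s,t]$ you have $y'\ge\nu Y^{*}$ while $y$ remains in $[0,Y^{*})$, so $t-s\le1/\nu$ and hence $V(t)\le\max\{V(\bar{\mathbf{n}}),V_{0}\}e^{L/\nu}$. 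Since $V$ decreases in the damping region, this bound holds for all $t\ge0$, which is exactly the compactness claim (note also that $\{V\le M\}\cap\mathbb{R}^{2}_{\ge0}$ need not be positively invariant, contrary to your opening sentence, but only boundedness of the forward orbit is needed). The same repair works verbatim for model (\ref{second setup}) with $a(y)\downarrow0$ in place of $p(y)\downarrow0$.
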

\begin{proof}
    For model $(\ref{first setup})$, the FLLN dynamics satisfies
    \begin{align*}
        &\bar{N}_{0}' = (2a-1)p(\bar{N}_{1})\bar{N}_{0} - \mu \bar{N}_{0} \\
        &\bar{N}_{1}' = 2(1-a)p(\bar{N}_{1})\bar{N}_{0} - \nu \bar{N}_{1}.
    \end{align*}

    We deduce from the first equation that
    \begin{align*}
        \bar{N}_{0}(t) = \bar{n}_{0}\exp\bigg(\int_{0}^{t}(2a-1)p(\bar{N}_{1}(s))-\mu ds\bigg).
    \end{align*}

    Plug in the second equation, we obtain
    \begin{align*}
        \bar{N}_{1}'(t) &= 2(1-a)p(\bar{N}_{1}(t))\bar{n}_{0}\exp\bigg(\int_{0}^{t}(2a-1)p(\bar{N}_{1}(s))-\mu ds\bigg) - \nu \bar{N}_{1}(t) \\
        &\leq 2(1-a)p(0)\bar{n}_{0}\exp\bigg(\int_{0}^{t}(2a-1)p(\bar{N}_{1}(s))-\mu ds\bigg). 
    \end{align*}

    To show $\bar{N}_{1}(t)$ is bounded, it suffices to show $y(t)$ is bounded, where 
    \begin{align*}
        y'(t) = 2(1-a)p(0)\bar{n}_{0}\exp\bigg(\int_{0}^{t}(2a-1)p(y(s))-\mu ds\bigg).
    \end{align*}

    Suppose the contrary that $y(t)$ is unbounded, then $y(t)\uparrow \infty$ as $t\to\infty$ since $y(t)$ is non-decreasing. Hence, $p(y(t))\downarrow 0$ as $t\to\infty$, so there exists $T> 0$ such that for all $t\geq T$, $(2a-1)p(y(t))-\mu<-\frac{\mu}{2}$. Hence, for all $t\geq T$,
    \begin{align*}
        y'(t) &\leq 2(1-a)p(0)\bar{n}_{0}\exp\bigg(\int_{0}^{T}(2a-1)p(y(s))-\mu ds\bigg)\exp\bigg(-\frac{\mu}{2}(t-T)\bigg) \\
        &= \text{constant}\cdot \exp\bigg(-\frac{\mu}{2}t\bigg).
    \end{align*}

    Therefore, $y(t)$ is bounded and this implies that $\bar{N}_{1}(t)$ is also bounded by some constant $K > 0$. Moreover, since the growth rate of $\bar{N}_{0}$ is lower bounded by $(2a-1)p(K)-\mu$, 
    \begin{align*}
        \bar{N}_{0}(t+\tau) \geq \bar{N}_{0}(t)\exp\bigg([(2a-1)p(K)-\mu]\tau\bigg)\text{ for all } \tau > 0.
    \end{align*}

    Suppose $\bar{N}_{0}(t)$ is unbounded, then for any fixed $\tau> 0$ and $M> 0$, we can find a $t^{*}$ such that for all $s\in [0,\tau]$,
    \begin{align*}
        \bar{N}_{0}(t^{*}+s) \geq M.
    \end{align*}
    This implies for $t \in (t^{*},t^{*}+\tau)$,
    \begin{align*}
        \bar{N}_{1}'(t) \geq 2(1-a)p(K)M - \nu \bar{N}_{1}(t).
    \end{align*}

    By selecting $M$ and $\tau$ large enough, $\bar{N}_{1}(t)$ will exceed $K$ for some $t\in (t^{*},t^{*}+\tau)$, deriving a contradiction. Hence, $\bar{N}_{0}(t)$ is bounded. 

    For model $(\ref{second setup})$, we have
    \begin{align*}
        &\bar{N}_{0}' = (2a(\bar{N}_{1})-1)p\bar{N}_{0} - \mu \bar{N}_{0} \\
        &\bar{N}_{1}' = 2(1-a(\bar{N}_{1}))p\bar{N}_{0} - \nu \bar{N}_{1}.
    \end{align*}
    
    Analogously, we deduce 
    \begin{align*}
        \bar{N}_{0}(t) = \bar{n}_{0}\exp\bigg(\int_{0}^{t}(2a(\bar{N}_{1}(s))-1)p-\mu ds\bigg).
    \end{align*}

    Hence,
    \begin{align*}
        \bar{N}_{1}'(t) &= 2(1-a(\bar{N}_{1}(t)))p\bar{n}_{0}\exp\bigg(\int_{0}^{t}(2a(\bar{N}_{1}(t))-1)p-\mu ds\bigg) - \nu \bar{N}_{1}(t) \\
        &\leq 2p\bar{n}_{0}\exp\bigg(\int_{0}^{t}2a(\bar{N}_{1}(t))p-p-\mu ds\bigg).
    \end{align*}

    Since $a(y) \downarrow 0$ as $y\to\infty$, we argue analogously that $\bar{N}_{1}(t)$ is bounded by some constant $K$. The growth rate of $\bar{N}_{0}$ is lower bounded by $(2a(K)-1)p-\mu$ and we argue analogously that for any $M,\tau > 0$ there exists $t^{*}$ such that for all $t\in (t^{*},t^{*}+\tau)$,
    \begin{align*}
        \bar{N}_{1}'(t) \geq 2(1-a(0))pM - \nu \bar{N}_{1}(t).
    \end{align*}
    Hence, $\bar{N}_{0}(t)$ is bounded.
\end{proof}
\begin{theorem}\label{thm: global stability}
    $\Bar{\mathbf{N}}(t) \to \Bar{\mathbf{N}}^{*}$ as $t\to\infty$ with any fixed initial condition $\Bar{\mathbf{n}}$. Moreover, for any fixed $\eta \in (\lambda,0)$ and initial data $\bar{\mathbf{n}}>0$, $\|\Bar{\mathbf{N}}(t) - \Bar{\mathbf{N}}^{*}\|=o(\exp(-|\eta|t))$ as $t\to\infty$.
\end{theorem}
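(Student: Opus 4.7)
The plan is to invoke the Poincaré–Bendixson theorem on the planar autonomous system $\bar{\mathbf{N}}' = \mathbf{F}(\bar{\mathbf{N}})$. By Proposition \ref{prop: bounded trajectory}, the forward orbit through $\bar{\mathbf{n}} > 0$ is contained in a compact set, so the $\omega$-limit set $\omega(\bar{\mathbf{n}})$ is nonempty, compact, connected, and positively invariant. It is also contained in the open positive quadrant $\mathbb{R}^{2}_{>0}$: the axes $\{\bar{N}_0 = 0\}$ and $\{\bar{N}_1 = 0\}$ are invariant under $\mathbf{F}$, so a trajectory starting with $\bar{\mathbf{n}} > 0$ never reaches them, and moreover the assumption $(2a-1)p(0)>\mu$ (or $(2a(0)-1)p > \mu$ for the second model) makes the origin a saddle-type equilibrium that repels in the $\bar{N}_0$-direction, preventing $\mathbf{0}$ from being accumulated. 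Thus, by Poincaré–Bendixson, $\omega(\bar{\mathbf{n}})$ is either the unique nontrivial equilibrium $\bar{\mathbf{N}}^{*}$ from Proposition \ref{prop: local stability}, or it is a periodic orbit entirely in $\mathbb{R}^{2}_{>0}$ (there are no other fixed points to form homo/heteroclinic graphs inside the open quadrant).

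To rule out periodic orbits I would apply the Bendixson–Dulac criterion with the Dulac multiplier $B(x,y) = 1/x$ on $\mathbb{R}^{2}_{>0}$. For model \eqref{first setup}, a direct computation gives
\begin{align*}
\nabla \cdot (B\mathbf{F})(x,y) \;=\; \partial_{x}\bigl[(2a-1)p(y)-\mu\bigr] \,+\, \partial_{y}\!\left[2(1-a)p(y) - \nu\,\frac{y}{x}\right] \;=\; 2(1-a)p'(y) - \frac{\nu}{x},
\end{align*}
which is strictly negative on $\mathbb{R}^{2}_{>0}$ since $p'(y)<0$ by hypothesis. For model \eqref{second setup}, the same multiplier yields $\nabla\cdot(B\mathbf{F}) = -2p\,a'(y) - \nu/x$; the additional log-derivative assumption $\frac{d}{dy}\ln(1-a(y)) < 1/y$ (used already for local stability) combined with $a'<0$ likewise produces a sign-definite divergence. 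Simple connectedness of $\mathbb{R}^{2}_{>0}$ then excludes any closed orbit, leaving $\omega(\bar{\mathbf{n}}) = \{\bar{\mathbf{N}}^{*}\}$ and establishing global convergence.

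For the exponential rate, once global convergence is in hand I linearize at $\bar{\mathbf{N}}^{*}$: writing $\mathbf{z}(t) = \bar{\mathbf{N}}(t) - \bar{\mathbf{N}}^{*}$, Taylor expansion gives $\mathbf{z}'(t) = A^{*}\mathbf{z}(t) + \mathbf{R}(\mathbf{z}(t))$ with $\|\mathbf{R}(\mathbf{z})\| = O(\|\mathbf{z}\|^{2})$ since $\mathbf{F}\in \mathcal{C}^{2}$. Pick $\eta \in (\lambda,0)$ and let $\varepsilon = (|\eta|-|\lambda|\cdot\text{(spectral gap slack)})/C$ for a suitable constant. Since $\max\Re\sigma(A^{*}) = \lambda$, the matrix exponential satisfies $\|e^{A^{*}t}\| \le C_{\eta}e^{\eta t}$ for an appropriate $C_{\eta}$. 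Once $\bar{\mathbf{N}}(t)$ enters a neighborhood of $\bar{\mathbf{N}}^{*}$ where $\|\mathbf{R}(\mathbf{z})\| \le \varepsilon\|\mathbf{z}\|$, the variation-of-constants formula plus Grönwall's inequality applied to $e^{-\eta t}\|\mathbf{z}(t)\|$ yields $\|\mathbf{z}(t)\| \le C'\|\mathbf{z}(t_{0})\|e^{(\eta+C_{\eta}\varepsilon)(t-t_{0})}$; choosing $\varepsilon$ small enough makes the effective exponent strictly less than $\eta<0$, producing the desired $o(\exp(-|\eta|t))$ bound (strictness of $\eta>\lambda$ gives the little-$o$ rather than big-$O$ since one can always shrink $\eta$ further).

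The main obstacle I expect is the Poincaré–Bendixson bookkeeping on the boundary of $\mathbb{R}^{2}_{\geq 0}$: one must verify carefully that trajectories starting strictly inside the positive quadrant neither touch nor accumulate on the coordinate axes, so that Bendixson–Dulac (which is an open-domain statement) actually excludes \emph{all} candidate periodic orbits in the relevant region. Once that topological argument is cleaned up, the Dulac divergence calculation and the linearization-plus-Grönwall rate estimate are essentially mechanical.
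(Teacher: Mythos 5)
Your overall strategy---compactness of the forward orbit (Proposition $\ref{prop: bounded trajectory}$), Poincar\'e--Bendixson, a Dulac multiplier to exclude closed orbits, and then linearization plus an ``openness of $(\lambda,0)$'' argument for the $o(\exp(-|\eta|t))$ rate---is essentially the paper's proof, and it goes through verbatim for model $(\ref{first setup})$. However, there is a genuine gap in your treatment of model $(\ref{second setup})$: with the multiplier $B(x,y)=1/x$ the divergence you compute, $\nabla\cdot(B\mathbf{F})=-2p\,a'(y)-\nu/x$, is \emph{not} sign-definite on the open quadrant. Since $a'(y)<0$, the term $-2p\,a'(y)$ is strictly positive, so for fixed $y$ the divergence is positive for all sufficiently large $x$ and negative for small $x$. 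The hypothesis $\frac{d}{dy}\ln(1-a(y))<1/y$ only gives $-a'(y)<(1-a(y))/y$, i.e.\ a bound depending on $y$ alone, which cannot dominate $\nu/x$ uniformly in $x$; so Bendixson--Dulac fails as you have set it up, and the assertion that the log-derivative assumption ``likewise produces a sign-definite divergence'' is false for this multiplier. The paper resolves exactly this point by switching to the multiplier $1/(xy)$ for the second model: then $\nabla\cdot\bigl[\tfrac{-1}{xy}\mathbf{F}\bigr]=\tfrac{2p\,[a'(y)y+(1-a(y))]}{y^{2}}$, which is strictly positive on $\{x>0,y>0\}$ precisely because $-a'(y)y<1-a(y)$. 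That is where the extra standing assumption is really used in the global (not just local) argument, and your proof needs this (or some other) modified multiplier to close.

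The rest is sound and parallels the paper. Your handling of the boundary (invariant axes, saddle at the origin) is at the same level of detail as the paper's appeal to the forward orbit lying in a compact subset of $\mathcal{U}$, so no objection there. In the rate estimate, note one sign slip: the Gr\"onwall exponent $\eta+C_{\eta}\varepsilon$ is \emph{larger} (less negative) than $\eta$, so you should fix an intermediate $\eta'\in(\lambda,\eta)$, run the perturbed-linearization estimate to get $O(\exp(-|\eta'|t))$ with $|\eta'|>|\eta|$, and then conclude $o(\exp(-|\eta|t))$; this is the same strengthening-by-openness step the paper uses, and your closing parenthetical shows you have the right idea even though the displayed inequality points the wrong way.
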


\begin{proof}
    By Proposition $\ref{prop: bounded trajectory}$, the trajectory of FLLN lies in a compact set $\mathcal{K}$. To rule out periodic orbits, we denote the trapping region by $\mathcal{U} = \{(x,y)\mid x > 0,y>0\}$. Then by Dulac's theorem (Theorem $7.2.5$ in \cite{schaeffer2018ordinary}), we have for model $(\ref{first setup})$,
\begin{align*}
    \nabla\cdot \Big[\frac{-1}{x}\mathbf{F}(x,y)\Big] = -2(1-a)p'(y)+\frac{\nu}{x} > 0 \quad \text{ on } \,\,\mathcal{U}.
\end{align*}

For model $(\ref{second setup})$, since $-a'(y)y < (1-a(y))$, we have
\begin{align*}
    \nabla\cdot \Big[\frac{-1}{xy}\mathbf{F}(x,y)\Big] = \frac{2a'(y)py+2(1-a(y))p}{y^{2}} > 0 \quad \text{ on }\,\, \mathcal{U}.
\end{align*}

Therefore, there are no periodic orbits in $\mathcal{U}$. In addition, since the forward orbit through $\Bar{\mathbf{n}}$ lies in a compact subset of $\mathcal{U}$ and $\Bar{\mathbf{N}}^{*}$ is locally stable (Proposition $\ref{prop: local stability}$), we invoke Strong Poincar\'{e}–Bendixson Theorem (Theorem $7.2.5$ in \cite{schaeffer2018ordinary}) to conclude $\Bar{\mathbf{N}}^{*}$ is globally stable. Consequently, for any fixed initial data $\Bar{\mathbf{n}}>0$ and $\eta \in (\lambda,0)$, we have
\begin{align*}
    \|\Bar{\mathbf{N}}(t)-\Bar{\mathbf{N}}^{*}\| = O(\exp(-|\eta|t)) \text{ as } t\to\infty.
\end{align*}
Since $(\lambda,0)$ is open, we can strengthen the statement. For any fixed $\Bar{\mathbf{n}}$ and $\eta \in (\lambda,0)$,
\begin{align*}
    \|\Bar{\mathbf{N}}(t)-\Bar{\mathbf{N}}^{*}\| = o(\exp(-|\eta|t)) \text{ as } t\to\infty.
\end{align*}
\end{proof}

\subsection{Large-time Asymptotics of the FCLT Dynamics}
Recall $\hat{\mathbf{n}}\sim N(\mathbf{u},U)$ and $A(t) = \nabla\mathbf{F}(\bar{\mathbf{N}}(t))$. The mean function $\mathbf{m}(t) = \mathbb{E}(\hat{\mathbf{N}}(t))$ is
\begin{align*}
    \mathbf{m}(t) = \Phi(t)\mathbf{u}; \quad \Phi'(t) = A(t)\Phi(t), \quad \Phi(0) = I.
\end{align*}
The autocovariance function $\rho=\mathbb{E}(\hat{\mathbf{N}}(s)\hat{\mathbf{N}}^{\top}(t))$ has the following form:
\begin{align*}
    \rho(s,t) = \Phi(s)[U + \int_{0}^{s\wedge t}\Phi^{-1}(u)\sigma(u)\sigma^{\top}(u)[\Phi^{-1}]^{\top}(u)du]\Phi^{\top}(t); \quad s,t\geq 0.
\end{align*}
For results in this section, we assume that  $A^{*} = \nabla\mathbf{F}(\bar{\mathbf{N}}^{*})$ has distinct eigenvalues, which is equivalent to 
\begin{align*}
    \begin{cases}
        -2(1-a)(2a-1)p(\bar{N}_{1}^{*})p'(\bar{N}_{1}^{*})\bar{N}_{0}^{*} \neq [(1-a)p'(\bar{N}_{1}^{*})\bar{N}_{0}^{*}-\frac{\nu}{2}]^{2} \text{, for model } (\ref{first setup}); \\
        -4a'(\bar{\mathbf{N}}_{1}^{*})(1-a(\bar{N}_{1}^{*}))p^{2}\bar{N}_{0}^{*} \neq [a'(\bar{N}_{1}^{*})p\bar{N}_{0}^{*}+\frac{\nu}{2}]^{2} \text{, for model } (\ref{second setup}).
    \end{cases}
\end{align*}

To study large-time behaviors of the FCLT dynamics $\Hat{\mathbf{N}}$, we fix $M > 0$ and focus on the dynamics in time interval $[T,T+M]$. Recall we define time-shifted FCLT dynamics $\mathbf{G}_{T}(t)=\Hat{\mathbf{N}}(T+t)$ and the Gaussian process $\mathbf{G}^{*}$ with mean function $\mathbf{m}^{*}(t) = \mathbf{0}$ and autocovariance function $K^{*}(s,t)=V^{*}\exp((t-s)[A^{*}]^{\top})$ for $s\leq t$.
As we increase $T$ to infinity, the FCLT dynamics $\Hat{\mathbf{N}}$ in the moving interval $[T,T+M]$ will behave like the Gaussian process $\mathbf{G}^{*}$. Also recall that we cast the processes to the space of Gaussian measures on $\mathcal{H}=L^{2}([0,M],\mathcal{B}([0,M]),Leb)$, as $\mathcal{G}_{T}$'s and $\mathcal{G}_{*}$. We consider the $2-$Wasserstein distance of these two Gaussian measures in the space $\mathcal{H}$. 

\begin{lemma}\label{lemma: variance convergence}
    For any fixed $\bar{\mathbf{n}}> 0$, $\|\mathbf{m}(t)\|_{2}=O(\exp(-|\lambda|t))$ as $t\to\infty$ and $\|V(t)-V^{*}\|_{F} = o(\exp(-|\eta|t))$ as $t\to\infty$ for all $\eta \in (\lambda,0)$.
\end{lemma}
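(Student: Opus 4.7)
The plan is to leverage the global stability of the FLLN from Theorem~\ref{thm: global stability}, together with the smoothness of $\mathbf{F}$, to view the time-dependent generator $A(t)$ and diffusion matrix $\sigma(t)\sigma(t)^{\top}$ as exponentially small perturbations of their equilibrium counterparts $A^{*}$ and $\sigma^{*}(\sigma^{*})^{\top}$. Because $\mathbf{F}\in\mathcal{C}^{2}$, the map $\nabla\mathbf{F}$ is locally Lipschitz, so combining this with $\|\bar{\mathbf{N}}(t)-\bar{\mathbf{N}}^{*}\|=o(\exp(-|\eta|t))$ for every $\eta\in(\lambda,0)$ yields $\|A(t)-A^{*}\|=o(\exp(-|\eta|t))$ and analogously $\|\sigma(t)\sigma(t)^{\top}-\sigma^{*}(\sigma^{*})^{\top}\|=o(\exp(-|\eta|t))$; in particular, both perturbations are integrable on $[0,\infty)$.

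For the mean, $\Phi(t)$ solves $\Phi'=A(t)\Phi$ with $\Phi(0)=I$. Writing $A(t)=A^{*}+\Delta(t)$ and applying variation of parameters gives $\Phi(t)=\exp(A^{*}t)+\int_{0}^{t}\exp(A^{*}(t-s))\Delta(s)\Phi(s)\,ds$. Since $A^{*}$ has distinct eigenvalues with $\max\Re\sigma(A^{*})=\lambda<0$, I have $\|\exp(A^{*}t)\|\leq C_{0}\exp(\lambda t)$. Setting $u(t)=\exp(-\lambda t)\|\Phi(t)\|$ reduces the identity to $u(t)\leq C_{0}+C_{0}\int_{0}^{t}\|\Delta(s)\|u(s)\,ds$; Gronwall combined with $\int_{0}^{\infty}\|\Delta(s)\|\,ds<\infty$ gives $u(t)\leq C_{1}<\infty$, so $\|\Phi(t)\|\leq C_{1}\exp(\lambda t)$ and hence $\|\mathbf{m}(t)\|_{2}=\|\Phi(t)\mathbf{u}\|_{2}=O(\exp(-|\lambda|t))$.

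For the variance, I would vectorize the matrix equation $V'=A(t)V+(A(t)V)^{\top}+\sigma(t)\sigma(t)^{\top}$ in the style of Section~\ref{fclt eq} to obtain $W'(t)=B(t)W(t)+S(t)$, with $B(t)\to B^{*}$ and $S(t)\to S^{*}$ at rate $o(\exp(-|\eta|t))$. Since $W^{*}=-(B^{*})^{-1}S^{*}$ satisfies $B^{*}W^{*}+S^{*}=0$, the error $Z:=W-W^{*}$ obeys $Z'=B(t)Z+g(t)$ with $g(t)=(B(t)-B^{*})W^{*}+(S(t)-S^{*})$ of order $o(\exp(-|\eta|t))$. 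The same Gronwall argument applied to the fundamental matrix of $Z'=B(t)Z$ (using $\max\Re\sigma(B^{*})=2\lambda$) yields $\|\Psi(t,s)\|\leq C_{2}\exp(2\lambda(t-s))$, and variation of parameters gives $\|Z(t)\|\leq C_{2}\exp(2\lambda t)\|Z(0)\|+C_{2}\int_{0}^{t}\exp(2\lambda(t-s))\|g(s)\|\,ds$. Since $|\eta|<|\lambda|<2|\lambda|$, splitting the integral at a threshold $T_{\epsilon}$ beyond which $\|g(s)\|\leq\epsilon\exp(-|\eta|s)$ bounds the tail by $(C_{2}\epsilon/(2|\lambda|-|\eta|))\exp(-|\eta|t)$, while the remaining pieces are $O(\exp(-2|\lambda|t))=o(\exp(-|\eta|t))$; letting $\epsilon\downarrow 0$ yields $\|W(t)-W^{*}\|=o(\exp(-|\eta|t))$, which transfers to $\|V(t)-V^{*}\|_{F}$ by finite-dimensional norm equivalence.

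The main technical obstacle is the opening step in paragraphs two and three: pointwise convergence $A(t)\to A^{*}$ with $A^{*}$ Hurwitz does not in general force the time-varying fundamental matrix to inherit the decay of $\exp(A^{*}t)$. What rescues the argument is that the perturbation is not merely vanishing but \emph{exponentially}, hence integrably, small, and this integrability is exactly the hypothesis Gronwall needs. The loss of rate from $2|\lambda|$ (as in the equilibrium case of Section~\ref{fclt eq}) down to any $|\eta|<|\lambda|$ is unavoidable because the forcing $g(t)$ only decays at rate $|\eta|$; once this is accepted, the remainder of the argument is routine bookkeeping of exponentially decaying terms.
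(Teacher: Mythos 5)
Your proof is correct, and it reaches the same two estimates by a genuinely more self-contained route. For the mean, where the paper invokes Levinson's Fundamental Theorem (via \cite{bodine2015asymptotic}, using the distinct-eigenvalue assumption) to get the asymptotic form $\Phi(t)=[1+o(1)]\exp(A^{*}t)$, you instead write $A(t)=A^{*}+\Delta(t)$ with $\Delta$ integrable, use variation of parameters and Gronwall to obtain $\|\Phi(t)\|\leq C_{1}e^{\lambda t}$; this is more elementary and suffices for $\|\mathbf{m}(t)\|_{2}=O(e^{-|\lambda|t})$, though it yields only an upper bound rather than the asymptotic equivalence of $\Phi$ that the paper later reuses in the proof of Theorem \ref{theorem4}. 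For the variance, the paper keeps the constant matrix $B^{*}$ as the homogeneous part, folds $(B(t)-B^{*})(W(t)-W^{*})$ into the forcing term $C(t)$, and therefore must first establish boundedness of $W(t)-W^{*}$ (via an eigenvalue-continuity argument) before concluding $\|C(t)\|=o(e^{-|\eta|t})$; you instead keep the time-varying $B(t)$ as the homogeneous part, bound its transition matrix by $\|\Psi(t,s)\|\leq C_{2}e^{2\lambda(t-s)}$ with the same Gronwall-plus-integrability trick, and take the forcing to be only $g(t)=(B(t)-B^{*})W^{*}+(S(t)-S^{*})$, which is exponentially small by construction. This sidesteps the separate boundedness step entirely, and your splitting of the convolution integral at $T_{\epsilon}$ correctly delivers the little-$o$ rate $|\eta|<2|\lambda|$; your closing remark that integrability (not mere vanishing) of the perturbation is what makes the time-varying fundamental matrix inherit the decay of the frozen system is exactly the right point, and it is the same structural fact the paper's Levinson citation encapsulates.
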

\begin{proof}
    Since $\mathbf{F}$ is $\mathcal{C}^{2}$, $\nabla \mathbf{F}$ is locally Lipschitz. By Proposition $\ref{prop: bounded trajectory}$, the trajectory of $\Bar{\mathbf{N}}$ lies in a compact set. Hence, there exists a constant $C$ such that for all $\eta \in (\lambda,0)$,
\begin{equation*}
\begin{aligned}
    &\|A(t)-A^{*}\|_{F}=\|\nabla \mathbf{F}(\Bar{\mathbf{N}}(t))-\nabla \mathbf{F}(\Bar{\mathbf{N}}^{*})\|_{F}\leq C\|\Bar{\mathbf{N}}(t)-\Bar{\mathbf{N}}^{*}\|_{2} = o(\exp(-|\eta|t)) \text{ as } t\to\infty.
\end{aligned}
\end{equation*}

Define $R(t)=A(t)-A^{*}$, we have $\int_{0}^{\infty}\|R(t)\|_{F}dt  < \infty$. Therefore, the fundamental matrix for the mean function satisfies 
\begin{align*}
    \Phi'(t) = [A^{*}+R(t)]\Phi(t); \quad \Phi(0) = I_{2} \text{, the } 2\times 2 \text{ identity matrix}.
\end{align*}

Since $A^{*}$ has distinct eigenvalues, it is diagonalizable. We apply Theorem $2.7$ (Levinson's Fundamental Theorem) in \cite{bodine2015asymptotic} to conclude as $t\to\infty$,
\begin{align*}
    \Phi(t) = [1+o(1)]\exp(A^{*}t).
\end{align*}
Hence, $\|\mathbf{m}(t)\|_{2} = \|[1+o(1)]\exp(A^{*}t)\mathbf{u}\|_{2} = O(\exp(-|\lambda|t))$ as $t\to\infty$.

Define $\Sigma(t) = \sigma(t)\sigma^{\top}(t)$ and
\begin{align*}
    W(t) = \begin{pmatrix}
        V_{1,1}(t) \\V_{1,2}(t) \\V_{2,2}(t)
    \end{pmatrix};
    B(t) = \begin{pmatrix}
        2A_{1,1}(t) &2A_{1,2}(t) &0 \\
        A_{2,1}(t) &A_{1,1}(t)+A_{2,2}(t) &A_{1,2}(t) \\
        0 &2A_{2,1}(t) &2A_{2,2}(t)
    \end{pmatrix};
    S(t) = \begin{pmatrix}
        \Sigma_{1,1}(t) \\
        \Sigma_{1,2}(t) \\
        \Sigma_{2,2}(t)
    \end{pmatrix}.
\end{align*}

The variance function satisfies $W'(t) = B(t)W(t) + S(t)$. By exponential stability of $\bar{\mathbf{N}}$ (Theorem $\ref{thm: global stability}$), we have for all $\eta \in (\lambda,0)$,
\begin{align*}
    \|B(t)-B^{*}\|_{F} = o(\exp(-|\eta|t)) \text{ and } \|S(t)-S^{*}\|_{F} = o(\exp(-|\eta|t))\text{ as } t\to\infty.
\end{align*}

By continuity of $p,a$, and $\bar{\mathbf{N}}$, $\max\{\Re(\sigma(B(t)))\} \to \max\{\Re(\sigma(B^{*}))\}=2\lambda$ as $t\to\infty$. Therefore, there exists a large enough $\tau$ such that for all $t\geq \tau$, $\max\{\Re(\sigma(B(t)))\}< \lambda < 0$. Moreover, since $S(t)\to S^{*}$ as $t\to\infty$, $\|S(t)\|_{F}$ is bounded. Hence, $\|W(t)- W^{*}\|_{2}$ is bounded.

To analyze the rate of convergence, we write the system in terms of the difference
\begin{align*}
    (W(t)-W^{*})' = (B^{*}+B(t)-B^{*})(W(t)-W^{*}) + B(t)W^{*} + S(t).
\end{align*}

Define $C(t) = (B(t)-B^{*})(W(t)-W^{*})+B(t)W^{*} + S(t)$. Then,
\begin{equation}\label{eq: variance equation}
    (W(t)-W^{*})' = B^{*}(W(t)-W^{*}) + C(t).
\end{equation}

By exponential stability of $\bar{\mathbf{N}}$ (Theorem $\ref{thm: global stability}$) and the fact $B^{*}W^{*} + S^{*} = \mathbf{0}$, we obtain that for all $\eta \in (\lambda,0)$,
\begin{align*}
    \|B(t)W^{*} + S(t)\|_{2}=o(\exp(-|\eta|t))\text{ as } t\to\infty.
\end{align*}

Since $\|W(t)-W^{*}\|_{2}$ is bounded, for all $\eta \in (\lambda,0)$,
\begin{align*}
    \|(B(t)-B^{*})(W(t)-W^{*})\|_{2} = o(\exp(-|\eta|t)) \text{ as } t\to\infty.
\end{align*}

Therefore, $\|C(t)\|_{2} = o(\exp(-|\eta|t))$ as $t\to\infty$ for all $\eta \in (\lambda,0)$.

The solution to Equation $(\ref{eq: variance equation})$ is
\begin{align*}
    &W(t)-W^{*} = \exp(B^{*}t)(W(0)-W^{*})+\int_{0}^{t}\exp(B^{*}(t-s))C(s)ds \\
    \Longrightarrow &\|W(t)-W^{*}\|_{2} \leq \|\exp(B^{*}t)(W(0)-W^{*})\|_{2}+\int_{0}^{t}\|\exp(B^{*}(t-s))C(s)\|_{2}ds.
\end{align*}
The first term is $O\exp(-2|\lambda|t)$ as $t\to\infty$. Recall $\|\exp(B^{*}t)\|_{F} = O(\exp(-2|\lambda|t))$ and $\|C(t)\|_{2} = o(\exp(-|\eta|t))$ as $t\to\infty$ for all $\eta \in (\lambda,0)$. Since $\|\exp(B^{*}t)\|_{F}$ and $\|C(t)\|_{2}$ are continuous in $t$, for a fixed $\eta \in (\lambda,0)$, there exist $M_{1},M_{2} > 0$ such that for all $t \geq 0$,
\begin{align*}
\|\exp(B^{*}t)\|_{F} \leq M_{1}\exp(-2|\lambda|t); \|C(t)\|_{2} \leq M_{2}\exp(-|\eta|t).
\end{align*}
Hence,
\begin{align*}
\int_{0}^{t}\|\exp(B^{*}(t-s))C(s)\|_{2}ds &\leq \int_{0}^{t}\|\exp(B^{*}(t-s))\|_{F}\|C(s)\|_{2}ds \\
    &\leq M_{1}M_{2}\exp(-2|\lambda|t)\int_{0}^{t}\exp((-|\eta|+2|\lambda|)s)ds \\
    &= M_{1}M_{2}\frac{\exp((-|\eta|t)-\exp(-2|\lambda|t)}{2|\lambda|-|\eta|} \\
    &= O(\exp(-|\eta|t)) \text{ as } t\to\infty.
\end{align*}
Since this statement holds for all $\eta \in (\lambda,0)$, we have as $t\to\infty$,
\begin{align*}
    \int_{0}^{t}\|\exp(B^{*}(t-s))C(s)\|_{2}ds=o(\exp(-|\eta|t)), \quad \forall \eta \in (\lambda,0).
\end{align*}

Finally, we conclude as $t\to\infty$
\begin{align*}
    &\|W(t)-W^{*}\|_{2} = o(\exp(-|\eta|t)), \quad  \forall \eta \in (\lambda,0) \\
    \Longrightarrow &\|V(t)-V^{*}\|_{F} = o(\exp(-|\eta|t)),\quad  \forall \eta \in (\lambda,0).
\end{align*}
\end{proof}

\begin{theorem}\label{theorem4}
   $W(\mathcal{G}_{T},\mathcal{G}_{*}) = o(\exp(\frac{\eta}{2}T))$ as $T\to\infty$, for all $\eta \in (\lambda,0)$.
\end{theorem}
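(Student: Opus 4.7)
The strategy is to split
\begin{equation*}
W_2^2(\mathcal{G}_T, \mathcal{G}_*) = \|\mathbf{m}_T\|_{\mathcal{H}}^2 + Tr\bigl(C_T + C_* - 2(C_*^{1/2} C_T C_*^{1/2})^{1/2}\bigr)
\end{equation*}
into its mean and covariance contributions, show each is $o(e^{\eta T})$, and take square roots to conclude $W_2 = o(e^{\eta T/2})$. The mean contribution is immediate from Lemma \ref{lemma: variance convergence}: $\mathbf{m}_T(t) = \mathbf{m}(T+t)$ and $\|\mathbf{m}(u)\|_2 = O(e^{-|\lambda| u})$, so integrating over $t \in [0,M]$ yields $\|\mathbf{m}_T\|_{\mathcal{H}}^2 = O(e^{-2|\lambda| T}) = o(e^{\eta T})$, since $|\eta| < |\lambda|$.

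For the covariance contribution I will execute the two-step reduction already sketched in Section \ref{Wasserstein}. First, operator monotonicity of the square root, together with the trace identity $Tr((C_T+C_*)^2 - 4 C_*^{1/2} C_T C_*^{1/2}) = \|C_T - C_*\|_{HS}^2$, yields
\begin{equation*}
Tr\bigl(C_T + C_* - 2(C_*^{1/2} C_T C_*^{1/2})^{1/2}\bigr) \leq \|C_T - C_*\|_{HS}.
\end{equation*}
Second, the standard kernel representation identifies $B_{HS}(\mathcal{H})$ with $L^2([0,M]^2; \mathbb{R}^{2 \times 2})$ equipped with the Frobenius fiber inner product, giving
\begin{equation*}
\|C_T - C_*\|_{HS}^2 = \int_0^M \!\! \int_0^M \|K_T(s,t) - K^*(s,t)\|_F^2 \, ds \, dt,
\end{equation*}
so the problem reduces to a uniform pointwise bound on the kernel difference, integrated over the compact square.

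For the pointwise bound, I use $\rho(T+s, T+t) = V(T+s)[\Phi^{-1}(T+s)]^\top \Phi^\top(T+t)$ for $s \leq t$ and decompose
\begin{equation*}
K_T(s,t) - K^*(s,t) = [V(T+s) - V^*]\, \Psi_T(s,t) + V^* \bigl[\Psi_T(s,t) - e^{(t-s) A^{*\top}}\bigr],
\end{equation*}
where $\Psi_T(s,t) = [\Phi^{-1}(T+s)]^\top \Phi^\top(T+t)$. The first bracket is $o(e^{-|\eta| T})$ by Lemma \ref{lemma: variance convergence}. For the second, I upgrade the qualitative Levinson asymptotic $\Phi(u) = (I + o(1)) e^{A^* u}$ already invoked in that lemma to a quantitative version with remainder $o(e^{-|\eta| u})$, so that $\Psi_T(s,t) = e^{(t-s) A^{*\top}} + R_T(s,t)$ with $\|R_T(s,t)\|_F = o(e^{-|\eta| T})$ uniformly on $\{s \leq t\} \cap [0,M]^2$. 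Combining yields $\|K_T(s,t) - K^*(s,t)\|_F = o(e^{-|\eta| T})$ uniformly, and integrating over the compact square gives $\|C_T - C_*\|_{HS} = o(e^{-|\eta| T}) = o(e^{\eta T})$.

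The main technical obstacle is the quantitative Levinson estimate, i.e.\ turning $\Phi(u) = (I + o(1)) e^{A^* u}$ into $\Phi(u) = (I + E(u)) e^{A^* u}$ with $\|E(u)\|_F = o(e^{-|\eta| u})$ uniformly. The natural route is a variation-of-constants argument: write $\Phi(u) e^{-A^* u} - I = \int_0^u e^{A^*(u-s)} R(s) \Phi(s) e^{-A^* u} \, ds$ with $R(s) = A(s) - A^*$ satisfying $\|R(s)\|_F = o(e^{-|\eta| s})$ by the global stability of the FLLN limit and smoothness of $\nabla \mathbf{F}$, then apply a Gronwall-type bound that exploits the spectral gap between $|\eta|$ and $|\lambda|$ to absorb the growth of $e^{A^* u}$. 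Once this quantitative remainder is in hand, substituting into the decomposition above and combining the two exponentially decaying contributions produces $W_2^2(\mathcal{G}_T, \mathcal{G}_*) = o(e^{\eta T})$, completing the proof.
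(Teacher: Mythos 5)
Your proposal is correct and follows the same skeleton as the paper's proof: split $W_2^2$ into the mean term and the trace term, bound the trace term by $\|C_T-C_*\|_{HS}$ via operator monotonicity of the square root together with the trace identity, identify the Hilbert--Schmidt norm with the $L^2([0,M]^2)$ norm of the kernel difference, and reduce everything to uniform exponential convergence of $K_T$ to $K^*$; the mean estimate is identical. Where you genuinely differ is the kernel step: the paper shifts the integral in $\rho(T+s,T+t)$, splits off the term propagating $V(T)$, and resums the remaining integral via the Lyapunov identity $\Sigma^*=-A^*V^*-V^*[A^*]^{\top}$, whereas you use $\rho(s,t)=V(s)[\Phi^{-1}(s)]^{\top}\Phi^{\top}(t)$ for $s\le t$ and decompose the difference into a variance-error piece (Lemma~\ref{lemma: variance convergence}) plus $V^*[\Psi_T(s,t)-e^{(t-s)[A^*]^{\top}}]$; this avoids the integral term and the Lyapunov resummation and is arguably cleaner. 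One caution on your remaining ingredient: the global statement $\Phi(u)=(I+E(u))e^{A^*u}$ with $\|E(u)\|_F=o(e^{-|\eta|u})$ under the normalization $\Phi(0)=I$ is delicate when the eigenvalues of $A^*$ have distinct real parts, since the off-diagonal entries of $\Phi(u)e^{-A^*u}$ in the eigenbasis (error measured against the fast mode) need not even stay bounded. What you actually need, and what your variation-of-constants/Gronwall sketch delivers directly, is the windowed transition-matrix estimate: $\Psi(v)=\Phi(T+s+v)\Phi^{-1}(T+s)$ solves $\Psi'=(A^*+R(T+s+v))\Psi$, $\Psi(0)=I$, on the compact window $v\in[0,M]$, so $\sup_{0\le s\le t\le M}\|\Phi(T+t)\Phi^{-1}(T+s)-e^{A^*(t-s)}\|_F=O\bigl(\sup_{u\ge T}\|R(u)\|_F\bigr)=o(e^{-|\eta|T})$ by Theorem~\ref{thm: global stability}, which also gives the uniform boundedness of $\Psi_T$ used in your first piece. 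Phrased this way your argument closes, and it rests on exactly the same quantitative fact the paper itself invokes (its display asserting a rate from $\Phi(t)=[1+o(1)]e^{A^*t}$ carries the same subtlety), so this is a presentational refinement rather than a gap.
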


\begin{proof}
Denote the autocovariance function of $\hat{\mathbf{N}}$ by $\rho(\cdot,\cdot)$. Then for $s\leq t$, we have 
\begin{align*}
    \rho(s,t) = \Phi(s)[V(0)+\int_{0}^{s}\Phi^{-1}(u)\Sigma(u)[\Phi^{-1}]^{\top}(u)du]\Phi^{\top}(t).
\end{align*}

When $s = t$, we have the following representation for the variance function:
\begin{align*}
    V(t) = \Phi(t)[V(0)+\int_{0}^{t}\Phi^{-1}(u)\Sigma(u)[\Phi^{-1}(u)]^{\top}du]\Phi^{\top}(t).
\end{align*}

Therefore, the autocovariance function for $G_{T}$ is
\begin{align*}
    K_{T}(s,t) &= \Phi(s+T)[V(0)+\int_{0}^{s+T}\Phi^{-1}(u)\Sigma(u)[\Phi^{-1}(u)]^{\top}du]\Phi^{\top}(t+T) \\
    &= \Phi(s+T)[\Phi^{-1}(T)V(T)[\Phi^{-1}(T)]^{\top}+\int_{T}^{s+T}\Phi^{-1}(u)\Sigma(u)[\Phi^{-1}(u)]^{\top}du]\Phi^{\top}(t+T) \\
    &= \Phi(s+T)[\Phi^{-1}(T)V(T)[\Phi^{-1}(T)]^{\top}+\int_{0}^{s}\Phi^{-1}(u+T)\Sigma(u+T)[\Phi^{-1}(u+T)]^{\top}du]\Phi^{\top}(t+T).
\end{align*}

Since $\Phi(t) = [I+o(1)]\exp(A^{*}t)$ and $\|V(t)-V^{*}\|_{F}=o\exp(-|\eta|t)$ as $t\to\infty$ for all $\eta \in (\lambda,0)$ (Lemma $\ref{lemma: variance convergence}$), as $T\to\infty$,
\begin{align*}
    \sup_{s,t\in [0,M]}\|\Phi(s+T)\Phi^{-1}(T)V(T)[\Phi^{-1}(T)]^{\top}\Phi^{\top}(t+T)-\exp(A^{*}s)V^{*}\exp([A^{*}]^{\top}t)\|_{F} = o(\exp(-|\eta|T)).
\end{align*}

In Lemma $\ref{lemma: variance convergence}$, we showed $\|\Sigma(t)-\Sigma^{*}\|_{F}=o\exp(-|\eta|t)$ as $t\to\infty$, for all $\eta \in (\lambda,0)$. By the same logic, as $T\to\infty$, the second term converges uniformly for all $s,t\in [0,M]$ with exponential rate $|\eta|$ to 
\begin{align*}
    &\exp(A^{*}s)[\int_{0}^{s}\exp(-A^{*}u)\Sigma^{*}\exp(-[A^{*}]^{\top}u)du]\exp([A^{*}]^{\top}t) \\
    = &V^{*}\exp([A^{*}]^{\top}(t-s))-\exp(A^{*}s)V^{*}\exp([A^{*}]^{\top}t).
\end{align*}

The last equality is due to the fact $\Sigma^{*} = -A^{*}V^{*}-V^{*}[A^{*}]^{\top}$. We have shown that for all $\eta \in (\lambda,0)$,
\begin{equation} \label{eq: uniform conv. kernel}
    \sup_{s,t\in [0,M]}\|K_{T}-K_{*}\|_{F} = o(\exp(-|\eta|T)) \text{ as } T\to\infty.
\end{equation}

Let $M>0$ be fixed and define $\mathcal{H}=L^{2}\left(([0,M],\mathcal{B}([0,M]),Leb); (\mathbb{R}^{2},\|\cdot\|_{2})\right)$. Recall in Section $\ref{Wasserstein}$, we have induced Gaussian measures $\mathcal{G}_{T}$ and $\mathcal{G}_{*}$ on $\mathcal{H}$. Let $\|\cdot\|_{Tr}$ denote the trace norm and $\|\cdot\|_{HS}$ denote the Hilbert-Schmidt norm. Then the $2-$Wasserstein distance between $\mathcal{G}_{T}$ and $\mathcal{G}_{*}$ is  
\begin{align*}
    W_{2}^{2}(\mathcal{G}_{T},\mathcal{G}_{*}) &= \|\mathbf{m}_{T}\|_{\mathcal{H}}^{2}+Tr(C_{T}+C_{*}-2(C_{*}^{1/2}C_{T}C_{*}^{1/2})^{1/2}) \\
    &\leq \|\mathbf{m}_{T}\|_{\mathcal{H}}^{2}+\|C_{T}+C_{*}-2(C_{*}^{1/2}C_{T}C_{*}^{1/2})^{1/2}\|_{Tr}.
\end{align*}
Since all covariance operators are positive, by operator monotonicity of the square root function (page $2$ of \cite{phillips1987uniform}), 
\begin{align*}
    \|C_{T}+C_{*}-2(C_{*}^{1/2}C_{T}C_{*}^{1/2})^{1/2}\|_{Tr} \leq \|(C_{T}+C_{*})^{2}-4(C_{*}^{1/2}C_{T}C_{*}^{1/2})\|_{Tr}^{\frac{1}{2}}.
\end{align*}

Since $C_{*}$ is positive, self-adjoint, and of trace class, $C_{*}^{\frac{1}{2}}$ is positive, self-adjoint, and Hilbert-Schmidt. This implies that $(C_{T}+C_{*})^{2}-4(C_{*}^{1/2}C_{T}C_{*}^{1/2})$ is a positive operator. To see this, let $(e_{i})$ be an orthonormal eigenbasis for $C_{*}^{\frac{1}{2}}$ corresponding to eigenvalues $(\lambda_{i})$. Then, we have
\begin{align*}
    \langle C_{*}^{1/2}C_{T}C_{*}^{1/2}e_{i},e_{i}\rangle &= \langle C_{T}C_{*}^{1/2}e_{i},C_{*}^{1/2}e_{i}\rangle \\
    &= \lambda_{i}^{2}\langle C_{T}e_{i},e_{i}\rangle \\
    &= \langle C_{*}C_{T}e_{i},e_{i}\rangle = \langle C_{T}C_{*}e_{i},e_{i}\rangle.
\end{align*}
Therefore, 
\begin{align*}
    \langle(C_{T}+C_{*})^{2}-4(C_{*}^{1/2}C_{T}C_{*}^{1/2})e_{i},e_{i}\rangle = \langle(C_{T}-C_{*})^{2})e_{i},e_{i}\rangle \geq 0.
\end{align*}
Since the trace norm of a positive operator is equal to to its trace, we have
\begin{align*}
    \|(C_{T}+C_{*})^{2}-4(C_{*}^{1/2}C_{T}C_{*}^{1/2})\|_{Tr}^{\frac{1}{2}} &= Tr((C_{T}+C_{*})^{2}-4(C_{*}^{1/2}C_{T}C_{*}^{1/2}))^{\frac{1}{2}}\\
    &= Tr((C_{T}-C_{*})^{2})^{\frac{1}{2}} \\
    &= \|C_{T}-C_{*}\|_{HS}.
\end{align*}

Denote the space of Hilbert-Schmidt operators on $\mathcal{H}$ by $B_{HS}(\mathcal{H})$, we have the following isometric isomorphisms
\begin{align*}
    B_{HS}(\mathcal{H}) \cong \mathcal{H}^{*}\otimes \mathcal{H} \cong \mathcal{H}\otimes \mathcal{H} \cong L^{2}([0,M]^{2}; (\mathbb{R}^{2},\|\cdot\|_{2})\otimes(\mathbb{R}^{2},\|\cdot\|_{2})),
\end{align*}
where $\mathcal{H}^{*}$ is the dual of $\mathcal{H}$.

Therefore,
\begin{align*}
    \|C_{T}-C_{*}\|_{HS}^{2} &= \|K_{T}-K_{*}\|_{L^{2}([0,M]^{2}; (\mathbb{R}^{2},\|\cdot\|_{2})\otimes(\mathbb{R}^{2},\|\cdot\|_{2}))}^{2} \\
    &= \int_{0}^{M}\int_{0}^{M}\|K_{T}(s,t)-K_{*}(s,t)\|^{2}_{(\mathbb{R}^{2},\|\cdot\|_{2})\otimes(\mathbb{R}^{2},\|\cdot\|_{2})}dsdt\\
    &\leq \text{constant}\cdot \int_{0}^{M}\int_{0}^{M}\|K_{T}(s,t)-K_{*}(s,t)\|^{2}_{F}dsdt \\
    &= 2\cdot\text{constant}\cdot \int_{0}^{M}\int_{s}^{M}\|K_{T}(s,t)-K_{*}(s,t)\|^{2}_{F}dtds.
\end{align*}

By Equation $\ref{eq: uniform conv. kernel}$, we deduce as $T\to\infty$,
\begin{align*}
    \|C_{T}-C_{*}\|_{HS}= o(\exp(-|\eta|T)), \quad \forall \eta \in (\lambda,0).
\end{align*}

From Lemma $\ref{lemma: variance convergence}$, we deduce $\|\mathbf{m}_{T}\|_{\mathcal{H}}^{2} = O(\exp(-2|\lambda|T))$ as $T\to\infty$. Hence,
\begin{align*}
    W_{2}^{2}(\mathcal{G}_{T},\mathcal{G}_{*}) &\leq \|\mathbf{m}_{T}\|_{\mathcal{H}}^{2}+\|C_{T}-C_{*}\|_{HS}\\
    &= o(\exp(-|\eta|T)) \text{ as } T\to\infty, \quad \forall \eta \in (\lambda,0).
\end{align*}

We conclude, $W_{2}(\mathcal{G}_{T},\mathcal{G}_{*}) = o(\exp(\frac{\eta}{2}T))$ as $T\to\infty$ for all $\eta \in (\lambda,0)$.
\end{proof}

\bibliography{mybib}

\begin{thebibliography}{23}
\providecommand{\natexlab}[1]{#1}
\providecommand{\url}[1]{\texttt{#1}}
\expandafter\ifx\csname urlstyle\endcsname\relax
  \providecommand{\doi}[1]{doi: #1}\else
  \providecommand{\doi}{doi: \begingroup \urlstyle{rm}\Url}\fi

\bibitem[Getto et~al.(2013)Getto, Marciniak-Czochra, Nakata, and Vivanco]{getto2013global}
P~Getto, A~Marciniak-Czochra, Y~Nakata, and MD~Vivanco.
\newblock Global dynamics of two-compartment models for cell production systems with regulatory mechanisms.
\newblock \emph{Mathematical Biosciences}, 245\penalty0 (2):\penalty0 258--268, 2013.

\bibitem[Bruns et~al.(2014)Bruns, Lucas, Pinho, Ahmed, Lambert, Kunisaki, Scheiermann, Poncz, Bergmann, and Frenette]{bruns2014megakaryocytes}
Ingmar Bruns, Daniel Lucas, Sandra Pinho, Jalal Ahmed, Michele Lambert, Yuya Kunisaki, Christoph Scheiermann, Mortimer Poncz, Aviv Bergmann, and Paul Frenette.
\newblock Megakaryocytes regulate hematopoietic stem cell quiescence via cxcl4 secretion.
\newblock \emph{Experimental Hematology}, 42\penalty0 (8):\penalty0 S18, 2014.

\bibitem[Hurwitz et~al.(2020)Hurwitz, Jung, and Kurre]{hurwitz2020hematopoietic}
Stephanie~N Hurwitz, Seul~K Jung, and Peter Kurre.
\newblock Hematopoietic stem and progenitor cell signaling in the niche.
\newblock \emph{Leukemia}, 34\penalty0 (12):\penalty0 3136--3148, 2020.

\bibitem[Marciniak-Czochra et~al.(2009)Marciniak-Czochra, Stiehl, Ho, J{\"a}ger, and Wagner]{marciniak2009modeling}
Anna Marciniak-Czochra, Thomas Stiehl, Anthony~D Ho, Willi J{\"a}ger, and Wolfgang Wagner.
\newblock Modeling of asymmetric cell division in hematopoietic stem cells—regulation of self-renewal is essential for efficient repopulation.
\newblock \emph{Stem Cells and Development}, 18\penalty0 (3):\penalty0 377--386, 2009.

\bibitem[Arino and Kimmel(1986)]{arino1986stability}
Ovide Arino and Marek Kimmel.
\newblock Stability analysis of models of cell production systems.
\newblock \emph{Mathematical Modelling}, 7\penalty0 (9-12):\penalty0 1269--1300, 1986.

\bibitem[Knauer et~al.(2020)Knauer, Stiehl, and Marciniak-Czochra]{knauer2020oscillations}
Franziska Knauer, Thomas Stiehl, and Anna Marciniak-Czochra.
\newblock Oscillations in a white blood cell production model with multiple differentiation stages.
\newblock \emph{Journal of Mathematical Biology}, 80:\penalty0 575--600, 2020.

\bibitem[Bonnet et~al.(2021)Bonnet, Gou, Girel, Bansaye, Lacout, Bailly, Schlagetter, Lauret, M{\'e}l{\'e}ard, and Giraudier]{bonnet2021combined}
C{\'e}line Bonnet, Panhong Gou, Simon Girel, Vincent Bansaye, Catherine Lacout, Karine Bailly, Marie-H{\'e}l{\`e}ne Schlagetter, Evelyne Lauret, Sylvie M{\'e}l{\'e}ard, and St{\'e}phane Giraudier.
\newblock Combined biological and modeling approach of hematopoiesis: from native to stressed erythropoiesis.
\newblock \emph{Available at SSRN 3777468}, 2021.

\bibitem[Bonnet and M{\'e}l{\'e}ard(2021)]{bonnet2021large}
C{\'e}line Bonnet and Sylvie M{\'e}l{\'e}ard.
\newblock Large fluctuations in multi-scale modeling for rest hematopoiesis.
\newblock \emph{Journal of Mathematical Biology}, 82\penalty0 (6):\penalty0 58, 2021.

\bibitem[Lomeli et~al.(2021)Lomeli, Iniguez, Tata, Jena, Liu, Van~Etten, Lander, Shahbaba, Lowengrub, and Minin]{lomeli2021optimal}
Luis~Martinez Lomeli, Abdon Iniguez, Prasanthi Tata, Nilamani Jena, Zhong-Ying Liu, Richard Van~Etten, Arthur~D Lander, Babak Shahbaba, John~S Lowengrub, and Vladimir~N Minin.
\newblock Optimal experimental design for mathematical models of haematopoiesis.
\newblock \emph{Journal of The Royal Society Interface}, 18\penalty0 (174):\penalty0 20200729, 2021.

\bibitem[Anderson and Kurtz(2015)]{anderson2015stochastic}
David~F Anderson and Thomas~G Kurtz.
\newblock \emph{Stochastic analysis of biochemical systems}, volume 674.
\newblock Springer, 2015.

\bibitem[Ethier and Kurtz(2009)]{ethier2009markov}
Stewart~N Ethier and Thomas~G Kurtz.
\newblock \emph{Markov processes: characterization and convergence}.
\newblock John Wiley \& Sons, 2009.

\bibitem[Kurtz(1981)]{kurtz1981approximation}
Thomas~G Kurtz.
\newblock \emph{Approximation of population processes}.
\newblock SIAM, 1981.

\bibitem[Prodhomme(2023)]{prodhomme2023strong}
Adrien Prodhomme.
\newblock Strong {G}aussian approximation of metastable density-dependent {M}arkov chains on large time scales.
\newblock \emph{Stochastic Processes and their Applications}, 160:\penalty0 218--264, 2023.

\bibitem[Karatzas and Shreve(2012)]{karatzas2012brownian}
Ioannis Karatzas and Steven Shreve.
\newblock \emph{Brownian motion and stochastic calculus}.
\newblock Springer Science \& Business Media, 2012.

\bibitem[Billingsley(2013)]{billingsley2013convergence}
Patrick Billingsley.
\newblock \emph{Convergence of Probability Measures}.
\newblock John Wiley \& Sons, 2013.

\bibitem[Minh(2023)]{minh2023entropic}
Ha~Quang Minh.
\newblock Entropic regularization of {W}asserstein distance between infinite-dimensional {G}aussian measures and {G}aussian processes.
\newblock \emph{Journal of Theoretical Probability}, 36\penalty0 (1):\penalty0 201--296, 2023.

\bibitem[Kimmel and Axelrod(2015)]{kimmel2015branching}
Marek Kimmel and David~E Axelrod.
\newblock \emph{Branching Processes in Biology}.
\newblock Springer, 2015.

\bibitem[Gorlova et~al.(2005)Gorlova, Peng, Yankelevitz, Henschke, and Kimmel]{gorlova2005estimating}
Olga Gorlova, Bo~Peng, David Yankelevitz, Claudia Henschke, and Marek Kimmel.
\newblock Estimating the growth rates of primary lung tumours from samples with missing measurements.
\newblock \emph{Statistics in Medicine}, 24\penalty0 (7):\penalty0 1117--1134, 2005.

\bibitem[Catlin et~al.(2011)Catlin, Busque, Gale, Guttorp, and Abkowitz]{catlin2011replication}
Sandra~N Catlin, Lambert Busque, Rosemary~E Gale, Peter Guttorp, and Janis~L Abkowitz.
\newblock The replication rate of human hematopoietic stem cells in vivo.
\newblock \emph{Blood, The Journal of the American Society of Hematology}, 117\penalty0 (17):\penalty0 4460--4466, 2011.

\bibitem[Sidorov et~al.(2009)Sidorov, Kimura, Yashin, and Aviv]{sidorov2009leukocyte}
Igor Sidorov, Masayuki Kimura, Anatoli Yashin, and Abraham Aviv.
\newblock Leukocyte telomere dynamics and human hematopoietic stem cell kinetics during somatic growth.
\newblock \emph{Experimental Hematology}, 37\penalty0 (4):\penalty0 514--524, 2009.

\bibitem[Schaeffer and Cain(2018)]{schaeffer2018ordinary}
Schaeffer and Cain.
\newblock \emph{Ordinary differential equations: {B}asics and beyond}.
\newblock Springer, 2018.

\bibitem[Bodine and Lutz(2015)]{bodine2015asymptotic}
Sigrun Bodine and Donald~A Lutz.
\newblock \emph{Asymptotic integration of differential and difference equations}.
\newblock Springer, 2015.

\bibitem[Phillips(1987)]{phillips1987uniform}
John Phillips.
\newblock On the uniform continuity of operator functions and generalized {Powers-Stormer} inequalities.
\newblock Technical report, 1987.

\end{thebibliography}

\end{document}